\newcommand{\seq}{\subseteq}
\newcommand{\ZZ}{\mathbb{Z}}
\newcommand{\CC}{\mathbb{C}}
\newcommand{\DS}{\displaystyle}
\newcommand{\cv}{\mathbf{c}}
\newcommand{\dv}{\mathbf{d}}
\newcommand{\rv}{\mathbf{r}}
\newcommand{\sv}{\mathbf{s}}
\newcommand{\xv}{\mathbf{x}}
\newcommand{\yv}{\mathbf{y}}
\newtheorem{theorem}{Theorem}[section]
\newtheorem{lem}[theorem]{Lemma}
\newtheorem{cor}[theorem]{Corollary}
\theoremstyle{definition}
\newtheorem{ex}[theorem]{Example}
\newtheorem{remark}{Remark}
\begin{document}

\title{Linear codes over the ring
$\ZZ_4+u\ZZ_4+v\ZZ_4+w\ZZ_4+uv\ZZ_4+uw\ZZ_4+vw\ZZ_4+uvw\ZZ_4$
}
\author{Bustomi\footnote{Department of Mathematics,
Faculty of Sciences and Technology,
Universitas Airlangga,
Campus C Jl. Mulyorejo Surabaya, 60115
INDONESIA},
Aditya Purwa Santika\footnote{Combinatorial Mathematics Research Group,
Faculty of Mathematics and Natural Sciences,
Institut Teknologi Bandung,
Jl. Ganesha 10, Bandung, 40132,
INDONESIA},
and Djoko Suprijanto\footnote{Combinatorial Mathematics Research Group,
Faculty of Mathematics and Natural Sciences,
Institut Teknologi Bandung,
Jl. Ganesha 10, Bandung, 40132,
INDONESIA,\hfill \texttt{djoko@math.itb.ac.id} }
}

\maketitle

\begin{abstract}
We investigate linear codes over the ring $\ZZ_4+u\ZZ_4+v\ZZ_4+w\ZZ_4+uv\ZZ_4+uw\ZZ_4+vw\ZZ_4+uvw\ZZ_4,$
with conditions $u^2=u$, $v^2=v$, $w^2=w$, $uv=vu$, $uw=wu$ and $vw=wv.$
We first analyze the structure of the ring and then define linear codes over this ring.
Lee weight and Gray map for these codes are defined and MacWilliams relations for complete, symmetrized, and Lee weight
enumerators are obtained.  The Singleton bound as well as maximum distance separable codes are also considered.  Furthermore,
cyclic and quasi-cyclic codes are discussed, and some examples are also provided.

\vspace{0.25cm}

\textbf{Keywords:} Linear codes, MacWilliams relations, Maximum distance separable codes, Cyclic codes, Quasi-cyclic codes.
\\
\end{abstract}

\section{Introduction}
Codes over rings have become an active research area in  classical coding theory over the recent decades.
In particular, after the appearance of the work of Hammons, Kumar, Calderbank, Sloane, and Sol\'{e} \cite{Hammons},
a lot of research went towards studying (linear) codes over $\ZZ_4.$  Although "the results were generalized to
many different types of rings, the codes over $\ZZ_4$ remain a special topic of interest in the field of
algebraic coding theory because of their relation to lattices, designs, cryptography and their many
applications"\footnote{\cite{Yildiz2014}, pp. 25.} \cite{Yildiz2014}.

Recently, several new families of rings, namely the non-chain rings but are Frobenius,
have been studied in connection with coding theory.  These rings have rich mathematical theory, in particular
algebraic structures.  Yildiz and Karadeniz \cite{Yildiz2014} derived algebraic structures related to linear codes over
the ring $\ZZ_4+u\ZZ_4,$ with $u^2=0.$  They \cite{Yildiz2014} also have several good formally self-dual codes
over $\ZZ_4$  from the codes over $\ZZ_4+u\ZZ_4.$   Bandi and Bhaintwal (\cite{Bandi1}, \cite{Bandi2})  considered
codes over the ring $\ZZ_4+v\ZZ_4,$ with $v^2=v,$ and $\ZZ_4+w\ZZ_4,$ with $w^2=1,2w,$ respectively
and derived several algebraic structures including the MacWilliams relation with respect to Rosenbloom-Tsfasman metric
over the ring $\ZZ_4+v\ZZ_4$ and the properties as well as construction of self-dual codes over the ring $\ZZ_4+w\ZZ_4.$
Very recently, Li, Guo, Zhu, and Kai \cite{Li17} generalized the ring considered by Bandi and Bhaintwal \cite{Bandi1}
by adding two new terms $u\ZZ_4$ and $uv\ZZ_4,$ with the conditions $u^2=u,$ $v^2=v,$ and $uv=vu,$ and
derived some properties corresponding to the linear codes over the ring
$\ZZ_4+u\ZZ_4+v\ZZ_4+uv\ZZ_4.$

In this paper, we further generalized the ring considered by  Li, Guo, Zhu, and Kai \cite{Li17} to
the ring  $\ZZ_4+u\ZZ_4+v\ZZ_4+w\ZZ_4+uv\ZZ_4+uw\ZZ_4+vw\ZZ_4+uvw\ZZ_4$ with the conditions
$u^2=u$, $v^2=v$, $w^2=w$, $uv=vu$, $uw=wu$ and $vw=wv$.  We study linear codes over this ring
and derive some corresponding properties.  The paper is organized as follows. In Section 2, we study main properties
of the ring $\ZZ_4+u\ZZ_4+v\ZZ_4+w\ZZ_4+uv\ZZ_4+uw\ZZ_4+vw\ZZ_4+uvw\ZZ_4.$  We then define linear codes, Lee weight,
and also a Gray map for the linear codes over $\ZZ_4+u\ZZ_4+v\ZZ_4+w\ZZ_4+uv\ZZ_4+uw\ZZ_4+vw\ZZ_4+uvw\ZZ_4.$
Singleton bound as well as maximum distance separable codes are slightly considered.
In Section 3, several kind of weight enumerators are defined and related MacWilliams relations are derived. Finally,
in Section 4, cyclic and quasi-cyclic codes over $\ZZ_4+u\ZZ_4+v\ZZ_4+w\ZZ_4+uv\ZZ_4+uw\ZZ_4+vw\ZZ_4+uvw\ZZ_4$
are investigated and several examples are provided.

Throughout this paper, we follow standard definitions of undefined terms as used in many coding theory books
(e.g. \cite{Huffman}).

\section{Structures of linear codes over $R$}

Throughout this paper, $R$ denotes the ring
$\ZZ_4+u\ZZ_4+v\ZZ_4+w\ZZ_4+uv\ZZ_4+uw\ZZ_4+vw\ZZ_4+uvw\ZZ_4$ with
$u^2=u$, $v^2=v$, $w^2=w$, $uv=vu$, $uw=wu$ dan $vw=wv$.

\subsection{Structures of the ring $R$ and a Gray map}

Let $R$ denote the ring
$\ZZ_4+u\ZZ_4+v\ZZ_4+w\ZZ_4+uv\ZZ_4+uw\ZZ_4+vw\ZZ_4+uvw\ZZ_4$ with
$u^2=u$, $v^2=v$, $w^2=w$, $uv=vu$, $uw=wu$ dan $vw=wv$.
The ring $R$ can also be regarded as a quotient ring of a polynomial ring over $\ZZ_4,$
namely $\ZZ_4[u,v,w]/\langle u^2-u,v^2-v,w^2-w\rangle$. $R$ is commutative with identity.
The element $\eta \in R$ is called idempotent if $\eta^2=\eta.$  The elements $x$  and $y$ of $R$ is called orthogonal
if $xy=0.$

We will define a Gray map from the ring $R$ to $\ZZ_4^8.$  For that purpose we consider first a decomposition
of $R.$

Consider the idempotent elements of $R$ below
\[
\begin{aligned}
	\eta_1 &= 1-u-v-w+uv+uw+vw-uvw = (1-u)(1-v)(1-w),\\
	\eta_2 &= u-uv-uw+uvw = u(1-v)(1-w),\\
	\eta_3 &= v-uv-vw+uvw = (1-u)v(1-w),\\
	\eta_4 &= w-uw-vw+uvw = (1-u)(1-v)w,\\
	\eta_5 &= uv-uvw = uv(1-w),\\
	\eta_6 &= uw-uvw = u(1-v)w,\\
	\eta_7 &= vw-uvw = (1-u)vw,\\
	\eta_8 &= uvw.
\end{aligned}
\]
The above eight elements are also pairwise orthogonal, since $\eta_i\eta_j=0$ for $i\neq j,$  and also
satisfy $\sum\limits_{i=1}^{8} \eta_i=1$. Hence, by Chinese Remainder Theorem, we have
\[
R=R\eta_1 \oplus R\eta_2 \oplus R\eta_3 \oplus R\eta_4 \oplus R\eta_5 \oplus R\eta_6 \oplus R\eta_7 \oplus R\eta_8.
\]
Moreover, for any $r=a+bu+cv+dw+euv+fuw+gvw+huvw\in R$ with $a,b,c,d,e,f,g,h \in \mathbb{Z}_4,$ we have
\begin{align*}
	r &= r\sum\limits_{i=1}^{8} \eta_i\\
	  &= r\eta_1+r\eta_2+r\eta_3+r\eta_4+r\eta_5+r\eta_6+r\eta_7+r\eta_8\\
	  &= a\eta_1+(a+b)\eta_2+(a+c)\eta_3+(a+d)\eta_4+(a+b+c+e)\eta_5\\
	  &\quad+(a+b+d+f)\eta_6+(a+c+d+g)\eta_7+(a+b+c+d+e+f+g+h)\eta_8\\
	  &= r_1\eta_1+r_2\eta_2+r_3\eta_3+r_4\eta_4+r_5\eta_5+r_6\eta_6+r_7\eta_7+r_8\eta_8
\end{align*}
with
\begin{align*}
	r_1 &= a\\
	r_2 &= a+b\\
	r_3 &= a+c\\
	r_4 &= a+d\\
	r_5 &= a+b+c+e\\
	r_6 &= a+b+d+f\\
	r_7 &= a+c+d+g\\
	r_8 &= a+b+c+d+e+f+g+h,
\end{align*}
and hence $r_1,r_2,r_3,r_4,r_5,r_6,r_7,r_8 \in \mathbb{Z}_4$.
It is clear that the expression $r=r_1\eta_1+r_2\eta_2+r_3\eta_3+r_4\eta_4+r_5\eta_5+r_6\eta_6+r_7\eta_7+r_8\eta_8$ is unique.
Then we define the map $\phi$ from $R$ to $\mathbb{Z}_4^8$ by
\[
r \longmapsto (r_1,r_2,r_3,r_4,r_5,r_6,r_7,r_8).
\]
It is easy to see that $\phi$ defines an isomorphism.  We define a \emph{Gray map} as an extension of the map $\phi$ on $R^n$ as
\begin{align*}
	\Phi : R^n &\longrightarrow \mathbb{Z}_4^{8n}\\
	(c_0,c_1,\dots,c_{n-1}) &\longmapsto (r_{1,0},r_{1,1},\dots,r_{1,n-1},r_{2,0},r_{2,1},\dots,r_{2,n-1},\dots,
r_{8,0},r_{8,1},\dots,r_{8,n-1}),	
\end{align*}
where $c_i\in R$ and $r_{ji}\in \ZZ_4$ satisfying $c_i=\sum\limits_{j=1}^8 r_{ji}\eta_j.$

The Lee weight on $\ZZ_4,$ denoted by $w_L,$ is defined as
\[
w_L(x):=
\begin{cases}
0, & x=0,\\
2, & x=2,\\
1, & x=1 \text{ or }3.
\end{cases}
\]
Remembering the map $\phi: r \longmapsto (r_1,r_2,\ldots,r_8),$ we define the Lee weight on $R$ as
$w_L(r)=\sum_{i=1}^8 w_L(r_i).$  The Lee weight of a vector $\mathbf{c}=(c_0,c_1,\ldots,c_{n-1}) \in R^n$
is defined to be a rational sum of the Lee weight of its components, that is $w_L(\mathbf{c})=\sum_{i=0}^{n-1} w_L(c_i).$
Moreover, for any $\cv, \dv \in R^n,$ the Lee distance between $\cv$ and $\dv$ is defined as
$d_L(\cv,\dv)=w_L(\cv-\dv).$  Meanwhile, we have also another kind of weight and distance called Hamming weight
and Hamming distance, that defined as $w_H(\rv)=|\{j:~r_j \neq 0,~0 \leq j \leq n-1\}|$
and $d_H(\rv, \sv)=w_H(\rv-\sv),$ for all $\rv,\sv \in R^n,$ respectively.

\subsection{Linear Codes over $R$}

A nonempty subset $C \seq R^n$ is called \emph{linear code} over $R$ if $C$ is a submodule of $R.$ To define
a dual of the code $C,$ let us first define the Euclidean inner product on $R^n.$
Let $\xv=(x_0,x_1,\ldots,x_{n-1})$ and $\yv=(y_0,y_1,\ldots,y_{n-1})$ be two vectors in $R^n.$
The Euclidean inner product of $\xv$
and $\yv$ is defined as
\[
\langle \xv, \yv \rangle:=\sum_{j=1}^n x_j y_j,
\]
where the operations are performed in the ring $R.$

\emph{Dual} of the code $C \seq R^n$ is the code
\[
C^\perp=\{\xv \in R^n:~\langle \xv, \yv \rangle=0, \text{ for all }\yv \in C\}.
\]
Clearly, $C^\perp$ is also linear if $C$ is linear over $R.$  Since $R$ is a Frobenius ring, we also have
$|C|\cdot|C^\perp|=4^{8n}.$

Denote $\mathbf{r}=(r^{(0)},r^{(1)},\ldots,r^{(n-1)}) \in R^n,$ and
$\mathbf{r}^{(i)}=r_{i1} \eta_1+r_{i2}\eta_2+\cdots+r_{i8}\eta_8,$ for $0 \leq i \leq n-1.$  Then $\mathbf{r}$ can be
uniquely expressed as
\[
\mathbf{r}=\mathbf{r}_1 \eta_1+\mathbf{r}_2 \eta_2+ \cdots+ \mathbf{r}_8 \eta_8,
\]
where $\mathbf{r}_j=(r_{0j},r_{1j},\ldots,r_{n-1,j}) \in \ZZ_4^n,$ for $1 \leq j \leq 8.$  By using this expression then
the inner product of any two vectors $\mathbf{x}, \mathbf{y} \in R^n$ can be written as
\[
\mathbf{x} \cdot \mathbf{y}=(\mathbf{x}_1 \cdot \mathbf{y}_1)\eta_1+(\mathbf{x}_2 \cdot \mathbf{y}_2)\eta_2+\cdots+
(\mathbf{x}_8 \cdot \mathbf{y}_8)\eta_8,
\]
where $\mathbf{x}=\mathbf{x}_1 \eta_1+\mathbf{x}_2 \eta_2+\cdots+\mathbf{x}_8 \eta_8,$
$\mathbf{x}_j=(x_{0j},x_{1j},\ldots,x_{n-1,j} ) \in \ZZ_4^n,$ and
$\mathbf{y}=\mathbf{y}_1 \eta_1+\mathbf{y}_2 \eta_2+\cdots+\mathbf{y}_8 \eta_8,$
$\mathbf{y}_j=(y_{0j},y_{1j},\ldots,y_{n-1,j} ) \in \ZZ_4^n,$ and $\mathbf{x}_j \cdot \mathbf{y}_j=\sum_{k=0}^{n-1}
x_{kj} y_{kj},$ for $1 \leq j \leq n.$

Now, define the codes $C_i,$ $1 \leq i \leq 8,$ as follows:
\[
\begin{aligned}
C_1 &=\{\mathbf{a} \in \ZZ_4^n:~\mathbf{a} \eta_1+\mathbf{b} \eta_2+\cdots+\mathbf{h} \eta_8,
\text{ for some } \mathbf{b}, \mathbf{c},\ldots,\mathbf{h} \in \ZZ_4^n \},\\
C_2 &=\{\mathbf{b} \in \ZZ_4^n:~\mathbf{a} \eta_1+\mathbf{b} \eta_2+\cdots+\mathbf{h} \eta_8,
\text{ for some } \mathbf{a}, \mathbf{c},\ldots,\mathbf{h} \in \ZZ_4^n \},\\
& \qquad \vdots\\
C_8 &=\{\mathbf{h} \in \ZZ_4^n:~\mathbf{a} \eta_1+\mathbf{b} \eta_2+\cdots+\mathbf{h} \eta_8,
\text{ for some } \mathbf{a}, \mathbf{b},\ldots,\mathbf{g} \in \ZZ_4^n \}.
\end{aligned}
\]
It is easy to see that $C_i,$ $1 \leq i \leq 8,$ is a linear code of length $n$ over $\ZZ_n.$  Moreover, $C$ can be
uniquely decomposed into $C=C_1 \eta_1 \oplus C_2 \eta_2 \oplus \cdots \oplus C_8 \eta_8,$
and hence we have $|C|=\prod_{i=1}^8 |C_i|.$
Furthermore, we have the following property.

\begin{theorem}\label{T-decom}
Let $C$ be a linear code of length $n$ over $R.$  Then we have the following unique decomposition:
\begin{enumerate}
\item $C=C_1 \eta_1 \oplus C_2 \eta_2 \oplus \cdots \oplus C_8 \eta_8,$ a linear code of length $n$ over $\ZZ_4.$

\item $C^\perp=C_1^\perp \eta_1 \oplus C_2^\perp \eta_2 \oplus \cdots \oplus C_8^\perp \eta_8,$ for $1 \leq i \leq 8.$

\end{enumerate}
\end{theorem}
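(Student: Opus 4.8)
The plan is to run everything off two facts already available in Section~2: the Chinese Remainder decomposition $R=R\eta_1\oplus\cdots\oplus R\eta_8$ with $\{\eta_1,\dots,\eta_8\}$ a complete system of pairwise orthogonal idempotents, and the componentwise formula
\[
\mathbf x\cdot\mathbf y=\sum_{j=1}^{8}(\mathbf x_j\cdot\mathbf y_j)\,\eta_j
\qquad\text{for }\mathbf x=\textstyle\sum_j\mathbf x_j\eta_j,\ \mathbf y=\sum_j\mathbf y_j\eta_j\in R^n .
\]
The single technical lemma I would isolate first is this: \emph{for every $\mathbf a\in C_j$ one has $\mathbf a\eta_j\in C$.} Indeed, by definition of $C_j$ there is a codeword $\mathbf y=\mathbf y_1\eta_1+\cdots+\mathbf y_8\eta_8\in C$ with $\mathbf y_j=\mathbf a$; since $C$ is an $R$-submodule and $\eta_j^2=\eta_j$, $\eta_i\eta_j=0$ for $i\neq j$, we get $\eta_j\mathbf y=\mathbf a\eta_j\in C$. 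This is the only place where "submodule", as opposed to "additive subgroup", is used, and it is the step I would be most careful about.

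For part (1), most of the statement is already established in the text: every $\mathbf r\in R^n$ has a \emph{unique} expression $\mathbf r=\mathbf r_1\eta_1+\cdots+\mathbf r_8\eta_8$ with $\mathbf r_j\in\ZZ_4^n$, so the decomposition $C=C_1\eta_1\oplus\cdots\oplus C_8\eta_8$ is unique once it is shown to hold. Each $C_j$ is the image of $C$ under the $\ZZ_4$-linear projection $\mathbf r\mapsto\mathbf r_j$, hence is a linear code of length $n$ over $\ZZ_4$. The inclusion $C\subseteq C_1\eta_1\oplus\cdots\oplus C_8\eta_8$ is immediate from writing $\mathbf y=\sum_j\mathbf y_j\eta_j$ with $\mathbf y_j\in C_j$; the reverse inclusion $C_1\eta_1\oplus\cdots\oplus C_8\eta_8\subseteq C$ follows from the lemma above (add up $\mathbf a_1\eta_1,\dots,\mathbf a_8\eta_8\in C$).

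For part (2), I would first prove $C_1^\perp\eta_1\oplus\cdots\oplus C_8^\perp\eta_8\subseteq C^\perp$: if $\mathbf x_j\in C_j^\perp$ for all $j$ and $\mathbf y\in C$ is arbitrary, then $\mathbf y_j\in C_j$, so each term $(\mathbf x_j\cdot\mathbf y_j)\eta_j$ in the displayed formula vanishes and $\mathbf x\cdot\mathbf y=0$. Equality can then be obtained in two equivalent ways. The cleaner one is a count: $|C_1^\perp\eta_1\oplus\cdots\oplus C_8^\perp\eta_8|=\prod_{j=1}^8|C_j^\perp|=\prod_{j=1}^8 4^n/|C_j|=4^{8n}/|C|=|C^\perp|$, using $|C|=\prod_j|C_j|$ from part (1) together with the Frobenius identity $|C|\cdot|C^\perp|=4^{8n}$ quoted in Section~2.2. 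Alternatively, for the direct reverse inclusion: given $\mathbf x\in C^\perp$ and $\mathbf a\in C_j$, the lemma gives $\mathbf a\eta_j\in C$, so $0=\mathbf x\cdot(\mathbf a\eta_j)=(\mathbf x_j\cdot\mathbf a)\eta_j$; extracting the $\eta_j$-coordinate (multiply by $\eta_j$, or use $\sum\eta_i=1$) forces $\mathbf x_j\cdot\mathbf a=0$, whence $\mathbf x_j\in C_j^\perp$ for every $j$. Uniqueness of this decomposition is again inherited from uniqueness of the $\eta$-expansion in $R^n$. I expect no real obstacle: the whole argument rests on $\{\eta_j\}$ being a complete orthogonal idempotent system and on $C$ being a genuine $R$-submodule, both of which are in hand, so I would present part (2) via the cardinality argument and mention the double-inclusion version as a remark.
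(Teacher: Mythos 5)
Your proposal is correct, and it is essentially the argument the paper has in mind: the paper's own ``proof'' is only the citation ``Similar to \cite{Li17}'', and the standard proof there proceeds exactly as you do, via the complete system of pairwise orthogonal idempotents, the observation that $\mathbf a\eta_j\in C$ for $\mathbf a\in C_j$ (using that $C$ is an $R$-submodule), and either the double inclusion or the Frobenius cardinality count $|C|\cdot|C^\perp|=4^{8n}$ for part (2). In effect you have supplied the details the paper delegates to the reference, with no gaps.
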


\begin{proof}
Similar to \cite{Li17}.
\end{proof}

It is well-known (see for instance \cite{Hammons}) that the code $C_j,$ $1 \leq i \leq 8,$
is permutation-equivalent to a code generated by
\[
G_j=
\begin{pmatrix}
I_{k_{j1}} & A_j & B_j\\
0 & 2I_{k_{j2}}  & 2C_j
\end{pmatrix},
\]
where $A_j$ and $C_j$ are $\ZZ_2$-matrices and $B$ is a $\ZZ_4$-matrix,
and hence $C$ is permutation-equivalent to a linear code generated by
\[
\begin{pmatrix}
\eta_1 G_1\\
\eta_2 G_2\\
\vdots \\
\eta_8 G_8
\end{pmatrix}.
\]

Moreover, by the result in \cite{Hammons}, the dual $C_j^\perp$ has generator
\[
G_j^\prime=
\begin{pmatrix}
-B_i^T-C_i^TA_i^T & C_i^T & I_{n-k_{i1}-k_{i2}}\\
2A_i^T & 2I_{k_{i2}}  & 0
\end{pmatrix},  \text{ for }1 \leq i \leq 8,
\]
and hence $C^\perp$ is permutation-equivalent to a linear code generated by
\[
\begin{pmatrix}
\eta_1 G_1^\prime\\
\eta_2 G_2^\prime\\
\vdots \\
\eta_8 G_8^\prime
\end{pmatrix},
\]
a parity check matrix of the code $C.$

\begin{ex}
Let $C_j$ be a linear code over $\ZZ_4$ with generator matrix $G_j,$ for $1 \leq j \leq 8,$ as follows
\[
G_j=
\begin{pmatrix}
1 & 1 & 1 & 3\\
0 & 2 & 0 & 2\\
0 & 0 & 2 & 2
\end{pmatrix}.
\]
The code $C_j$ contains $2^{2k_{j1}+k_{j2}}=2^{2+2}=16.$ Since $4^4=|C| \cdot |C^\perp|=4^2 \cdot |C^\perp|,$
then $|C^\perp|=4^2=|C|.$ The linear code $\DS C= \bigoplus_{j=1}^8 C_j \eta_j$ over $R$
is of cardinality $|C|=16^8=4^{2 \cdot 8},$  while the generator matrix of $\Phi(C)$ is

\[
\left(
\begin{smallmatrix}
1 & 0 & 0 & 0 & 0 & 0 & 0 & 0 & 1 & 0 & 0 & 0 & 0 & 0 & 0 & 0 & 1 & 0 & 0 & 0 & 0 & 0 & 0 & 0 & 3 & 0 & 0 & 0 & 0 & 0 & 0 & 0\\
0 & 0 & 0 & 0 & 0 & 0 & 0 & 0 & 2 & 0 & 0 & 0 & 0 & 0 & 0 & 0 & 0 & 0 & 0 & 0 & 0 & 0 & 0 & 0 & 2 & 0 & 0 & 0 & 0 & 0 & 0 & 0\\
0 & 0 & 0 & 0 & 0 & 0 & 0 & 0 & 0 & 0 & 0 & 0 & 0 & 0 & 0 & 0 & 2 & 0 & 0 & 0 & 0 & 0 & 0 & 0 & 2 & 0 & 0 & 0 & 0 & 0 & 0 & 0\\
0 & 1 & 0 & 0 & 0 & 0 & 0 & 0 & 0 & 1 & 0 & 0 & 0 & 0 & 0 & 0 & 0 & 1 & 0 & 0 & 0 & 0 & 0 & 0 & 0 & 3 & 0 & 0 & 0 & 0 & 0 & 0\\
0 & 0 & 0 & 0 & 0 & 0 & 0 & 0 & 0 & 2 & 0 & 0 & 0 & 0 & 0 & 0 & 0 & 0 & 0 & 0 & 0 & 0 & 0 & 0 & 0 & 2 & 0 & 0 & 0 & 0 & 0 & 0\\
0 & 0 & 0 & 0 & 0 & 0 & 0 & 0 & 0 & 0 & 0 & 0 & 0 & 0 & 0 & 0 & 0 & 2 & 0 & 0 & 0 & 0 & 0 & 0 & 0 & 2 & 0 & 0 & 0 & 0 & 0 & 0\\
0 & 0 & 1 & 0 & 0 & 0 & 0 & 0 & 0 & 0 & 1 & 0 & 0 & 0 & 0 & 0 & 0 & 0 & 1 & 0 & 0 & 0 & 0 & 0 & 0 & 0 & 3 & 0 & 0 & 0 & 0 & 0\\
0 & 0 & 0 & 0 & 0 & 0 & 0 & 0 & 0 & 0 & 2 & 0 & 0 & 0 & 0 & 0 & 0 & 0 & 0 & 0 & 0 & 0 & 0 & 0 & 0 & 0 & 2 & 0 & 0 & 0 & 0 & 0\\
0 & 0 & 0 & 0 & 0 & 0 & 0 & 0 & 0 & 0 & 0 & 0 & 0 & 0 & 0 & 0 & 0 & 0 & 2 & 0 & 0 & 0 & 0 & 0 & 0 & 0 & 2 & 0 & 0 & 0 & 0 & 0\\
0 & 0 & 0 & 1 & 0 & 0 & 0 & 0 & 0 & 0 & 0 & 1 & 0 & 0 & 0 & 0 & 0 & 0 & 0 & 1 & 0 & 0 & 0 & 0 & 0 & 0 & 0 & 3 & 0 & 0 & 0 & 0\\
0 & 0 & 0 & 0 & 0 & 0 & 0 & 0 & 0 & 0 & 0 & 2 & 0 & 0 & 0 & 0 & 0 & 0 & 0 & 0 & 0 & 0 & 0 & 0 & 0 & 0 & 0 & 2 & 0 & 0 & 0 & 0\\
0 & 0 & 0 & 0 & 0 & 0 & 0 & 0 & 0 & 0 & 0 & 0 & 0 & 0 & 0 & 0 & 0 & 0 & 0 & 2 & 0 & 0 & 0 & 0 & 0 & 0 & 0 & 2 & 0 & 0 & 0 & 0\\
0 & 0 & 0 & 0 & 1 & 0 & 0 & 0 & 0 & 0 & 0 & 0 & 1 & 0 & 0 & 0 & 0 & 0 & 0 & 0 & 1 & 0 & 0 & 0 & 0 & 0 & 0 & 0 & 3 & 0 & 0 & 0\\
0 & 0 & 0 & 0 & 0 & 0 & 0 & 0 & 0 & 0 & 0 & 0 & 2 & 0 & 0 & 0 & 0 & 0 & 0 & 0 & 0 & 0 & 0 & 0 & 0 & 0 & 0 & 0 & 2 & 0 & 0 & 0\\
0 & 0 & 0 & 0 & 0 & 0 & 0 & 0 & 0 & 0 & 0 & 0 & 0 & 0 & 0 & 0 & 0 & 0 & 0 & 0 & 2 & 0 & 0 & 0 & 0 & 0 & 0 & 0 & 2 & 0 & 0 & 0\\
0 & 0 & 0 & 0 & 0 & 1 & 0 & 0 & 0 & 0 & 0 & 0 & 0 & 1 & 0 & 0 & 0 & 0 & 0 & 0 & 0 & 1 & 0 & 0 & 0 & 0 & 0 & 0 & 0 & 3 & 0 & 0\\
0 & 0 & 0 & 0 & 0 & 0 & 0 & 0 & 0 & 0 & 0 & 0 & 0 & 2 & 0 & 0 & 0 & 0 & 0 & 0 & 0 & 0 & 0 & 0 & 0 & 0 & 0 & 0 & 0 & 2 & 0 & 0\\
0 & 0 & 0 & 0 & 0 & 0 & 0 & 0 & 0 & 0 & 0 & 0 & 0 & 0 & 0 & 0 & 0 & 0 & 0 & 0 & 0 & 2 & 0 & 0 & 0 & 0 & 0 & 0 & 0 & 2 & 0 & 0\\
0 & 0 & 0 & 0 & 0 & 0 & 1 & 0 & 0 & 0 & 0 & 0 & 0 & 0 & 1 & 0 & 0 & 0 & 0 & 0 & 0 & 0 & 1 & 0 & 0 & 0 & 0 & 0 & 0 & 0 & 3 & 0\\
0 & 0 & 0 & 0 & 0 & 0 & 0 & 0 & 0 & 0 & 0 & 0 & 0 & 0 & 2 & 0 & 0 & 0 & 0 & 0 & 0 & 0 & 0 & 0 & 0 & 0 & 0 & 0 & 0 & 0 & 2 & 0\\
0 & 0 & 0 & 0 & 0 & 0 & 0 & 0 & 0 & 0 & 0 & 0 & 0 & 0 & 0 & 0 & 0 & 0 & 0 & 0 & 0 & 0 & 2 & 0 & 0 & 0 & 0 & 0 & 0 & 0 & 2 & 0\\
0 & 0 & 0 & 0 & 0 & 0 & 0 & 1 & 0 & 0 & 0 & 0 & 0 & 0 & 0 & 1 & 0 & 0 & 0 & 0 & 0 & 0 & 0 & 1 & 0 & 0 & 0 & 0 & 0 & 0 & 0 & 3\\
0 & 0 & 0 & 0 & 0 & 0 & 0 & 0 & 0 & 0 & 0 & 0 & 0 & 0 & 0 & 2 & 0 & 0 & 0 & 0 & 0 & 0 & 0 & 0 & 0 & 0 & 0 & 0 & 0 & 0 & 0 & 2\\
0 & 0 & 0 & 0 & 0 & 0 & 0 & 0 & 0 & 0 & 0 & 0 & 0 & 0 & 0 & 0 & 0 & 0 & 0 & 0 & 0 & 0 & 0 & 2 & 0 & 0 & 0 & 0 & 0 & 0 & 0 & 2
\end{smallmatrix}
\right).
\]
$\diamondsuit$
\end{ex}

\subsection{Singleton bound and MDS codes}

Singleton bound is among the famous bound in Coding Theory.  It is proven in 1964 by
Singleton \cite{Singleton} that if $C \seq R^n$ is a code over $R,$
then we have
\[
d_H(C)\leq n - \log_{|R|}|C|+1,
\]
where $d_H(C)=\text{min}\{w_H(\xv-\yv):~ \xv,\yv \in C, \xv \neq \yv\}.$  The code $C$ is called
\emph{maximum distance separable} (MDS) if it attains the Singleton bound mentioned above.

It has been proven by Guenda and Gulliver \cite[Proposition 2.2]{Guenda} that the only MDS codes
over $\ZZ_4$ is the trivial one.  Moreover, it is also known that
$C^\perp$ is an MDS code if $C$ is an MDS code (see \cite[Theorem 1]{Shiro2}). Hence, we have the following.

\begin{lem}
Let $C$ be a linear code of length $n$ over $\ZZ_4.$  Then $C$ is an MDS codes if and only if
$C$ is either $\ZZ_4^n$ of parameters $[n,4^n,1],$ $\langle \mathbb{1} \rangle$ of parameters $[n,4,n],$
or $\langle \mathbb{1} \rangle^\perp$ of parameters $[n,4^{n-1},2],$ where $\mathbb{1}$ denotes the all-one vector.
\end{lem}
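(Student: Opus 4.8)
The plan is to leverage the two cited facts—Guenda–Gulliver's classification showing $\ZZ_4^n$ is the only "genuine" MDS code over $\ZZ_4$ together with the duality result that $C^\perp$ is MDS whenever $C$ is—and to reconcile them by carefully tracking what the Singleton bound says when $|C|$ is not a power of $|R|=4$, i.e., when $\log_{|R|}|C|$ is not an integer. First I would recall that a linear code $C$ over $\ZZ_4$ has $|C|=2^{2k_1+k_2}$ where $k_1,k_2$ come from the standard generator matrix $\begin{pmatrix} I_{k_1} & A & B \\ 0 & 2I_{k_2} & 2C \end{pmatrix}$; thus $\log_4|C| = k_1 + k_2/2$, and the Singleton bound reads $d_H(C) \le n - k_1 - k_2/2 + 1$. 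I would then separate into the cases $k_2 = 0$ (free codes, where the bound has its classical integer form) and $k_2 > 0$.

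For the free case $k_2=0$, the code is a free $\ZZ_4$-module of rank $k_1$, and the Singleton bound becomes $d_H(C) \le n - k_1 + 1$ in the usual sense; here I would invoke Guenda–Gulliver's Proposition 2.2 directly to conclude that the only possibilities attaining equality are $k_1 = n$ (giving $\ZZ_4^n$ with parameters $[n, 4^n, 1]$) and $k_1 = 1$ with $C = \langle \mathbb{1}\rangle$ of parameters $[n, 4, n]$—one should check the repetition code genuinely meets the bound, which is immediate since any nonzero multiple of $\mathbb{1}$ has full Hamming weight $n$. For the non-free case, the key observation is that if $k_2 > 0$ then $C$ properly contains the subcode $2C' $ obtained from the second block of rows, and more usefully one can pass to the dual: by the generator matrix $G'$ for $C^\perp$ recalled in the excerpt, $C^\perp$ also has a "$2$-part" of rank exactly $k_2$ (the block $2A^T \mid 2I_{k_2} \mid 0$), so $C^\perp$ is non-free as well, and $|C|\cdot|C^\perp| = 4^n$ forces $\log_4|C^\perp| = n - k_1 - k_2/2$. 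If $C$ were MDS then $C^\perp$ would be MDS by \cite[Theorem 1]{Shiro2}; applying the argument to $C^\perp$ shows its minimum distance is $2$ and its structure is forced. Unwinding, the only non-free MDS code is $\langle \mathbb{1}\rangle^\perp$, which one verifies has minimum Hamming distance $2$ (it contains vectors like $(1,-1,0,\dots,0)$ and cannot contain a weight-one word since that would force a coordinate functional to vanish on all of $\langle\mathbb 1\rangle$, contradiction) and cardinality $4^{n-1}$, matching $n - (n-1) - \ldots + 1 = 2$.

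The main obstacle I anticipate is handling the half-integer term $k_2/2$ in $\log_{|R|}|C|$ rigorously: one must argue that when $k_2$ is odd the Singleton bound, though it involves a non-integer, still cannot be met except in the enumerated cases, and that the "if" direction (these three codes really are MDS) is checked separately for each. I would dispatch this by noting that $d_H(C)$ is an integer while $n - k_1 - k_2/2 + 1$ is a half-integer when $k_2$ is odd, so equality is impossible for odd $k_2 \geq 1$; for even $k_2 \geq 2$ one reduces to the free case via the dual as above, since $k_2$ even and positive still makes both $C$ and $C^\perp$ non-free, and a direct count shows $d_H$ falls strictly below the bound unless we are in the $\langle\mathbb 1\rangle^\perp$ situation. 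Assembling the "only if" exclusions with the three explicit "if" verifications then gives the stated equivalence.
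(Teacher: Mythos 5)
You follow the same top-level route as the paper, which simply cites Guenda--Gulliver's Proposition 2.2 together with Shiromoto's duality theorem, but the case analysis you wrap around those citations contains a genuine error. The code $\langle \mathbb{1}\rangle^\perp=\{x\in\ZZ_4^n:\sum_i x_i=0\}$ is a \emph{free} $\ZZ_4$-module of rank $n-1$ (it is the kernel of the split surjection $x\mapsto\sum_i x_i$, generated by $(1,-1,0,\ldots,0),(0,1,-1,0,\ldots,0),\ldots$), i.e.\ it has $k_2=0$ and type $4^{n-1}$. So your free-case conclusion, that attaining the bound forces $k_1=n$ or $k_1=1$, omits an MDS code that genuinely lives in that case, while your non-free case is anchored to the false claim that ``the only non-free MDS code is $\langle\mathbb{1}\rangle^\perp$.'' The correct conclusion for $k_2\geq 2$ even is that \emph{no} MDS code exists at all, and your argument for that case does not go through: you propose to ``reduce to the free case via the dual,'' but, as you yourself observe, $C^\perp$ has the same parameter $k_2$ and is therefore also non-free, so the reduction is circular, and the ``direct count'' that is supposed to finish the job is never exhibited.

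The gap is fillable, and in a way that also subsumes your (correct) half-integer argument for odd $k_2$: if $k_2>0$, consider the torsion code $C_2=\{v \bmod 2:\ 2v\in C\}$, a binary $[n,k_1+k_2]$ code; every nonzero $\bar v\in C_2$ lifts to a nonzero codeword $2\tilde v\in C$ of the same Hamming weight, so $d_H(C)\leq d_H(C_2)\leq n-(k_1+k_2)+1 < n-k_1-k_2/2+1$, and the Singleton bound over $\ZZ_4$ is never met. With that, the classification reduces entirely to the free case, where Guenda--Gulliver (read as listing all three trivial MDS codes, or combined with Shiromoto duality applied to $\langle\mathbb{1}\rangle$ to recover $\langle\mathbb{1}\rangle^\perp$) gives the stated list; your ``if'' verifications for the three codes are fine. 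Note that the paper itself offers no argument beyond the two citations, so your attempt proves strictly more, but as written the even-$k_2$ exclusion and the misplacement of $\langle\mathbb{1}\rangle^\perp$ are real defects.
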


Let us look at the MDS codes over $R.$  By considering a linear code $C$ of length $n$ over $R$ as
$C=C_1 \eta_1 \oplus C_2 \eta_2 \oplus \cdots \oplus C_8 \eta_8,$  where $C_i,$ $1 \leq i \leq 8,$  is a linear code of
length $n$ over $\ZZ_4,$ the Singleton bound can be written as
\[
d_H(C) \leq n-\frac{1}{8} \sum_{i=1}^8 \log_{4}|C_i| + 1,
\]
where $d_H(C)=\text{min}\{d_H(C_i):~1 \leq i \leq 8\},$ and $d_H(C_i)$ is a Hamming distance of $C_i,$
for $1 \leq i \leq 8.$

Then we have the following theorem.

\begin{theorem}
Let $C$ be an MDS codes of length $n$ over $R.$
\begin{itemize}
\item[(1)] If $d_H=1,$ then $C_i$ is an MDS code of parameters $[n,4^n,1],$ for $1 \leq i \leq 8.$

\item[(2)] If $d_H=2,$ then $C_i$ is an MDS code of parameters $[n,4^{n-1},2],$ for $1 \leq i \leq 8.$

\item[(3)] If $d_H=n,$ then $C_i$ is an MDS code of parameters $[n,4,n],$ for $1 \leq i \leq 8.$

\end{itemize}

\end{theorem}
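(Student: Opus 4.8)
The plan is to combine the decomposition $C = \bigoplus_{i=1}^8 C_i\eta_i$ from Theorem~\ref{T-decom} with the characterization of MDS codes over $\ZZ_4$ in the preceding Lemma, exploiting the fact that the Singleton bound for $C$ is governed by $d_H(C) = \min\{d_H(C_i) : 1 \leq i \leq 8\}$ together with $\log_{|R|}|C| = \tfrac{1}{8}\sum_{i=1}^8 \log_4 |C_i|$. First I would write out what it means for $C$ to be MDS: equality
\[
d_H(C) = n - \frac{1}{8}\sum_{i=1}^8 \log_4 |C_i| + 1
\]
holds. On the other hand, each $C_i$ satisfies its own Singleton bound $d_H(C_i) \leq n - \log_4|C_i| + 1$, and since $d_H(C) \leq d_H(C_i)$ for every $i$, I would average these eight inequalities to get
\[
d_H(C) \leq \frac{1}{8}\sum_{i=1}^8 d_H(C_i) \le n - \frac{1}{8}\sum_{i=1}^8 \log_4|C_i| + 1.
\]
The MDS hypothesis forces the right-hand chain to be a chain of equalities, which yields two consequences: $d_H(C_i) = n - \log_4|C_i| + 1$ for each $i$ (so every $C_i$ is itself MDS over $\ZZ_4$), and $d_H(C_i) = d_H(C)$ for each $i$ (so all the $C_i$ have the same minimum distance, equal to $d_H$).

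The rest is then immediate from the Lemma: the only MDS codes over $\ZZ_4$ have parameters $[n,4^n,1]$, $[n,4^{n-1},2]$, or $[n,4,n]$, with pairwise distinct minimum distances $1$, $2$, $n$ (assuming $n \geq 3$, so these three values are genuinely different). Hence once $d_H$ is fixed to one of the values $1$, $2$, or $n$, the common value of $d_H(C_i)$ is pinned down, and each $C_i$ must be the unique MDS code over $\ZZ_4$ with that minimum distance. This gives exactly the three cases in the statement: $d_H = 1$ forces each $C_i \cong \ZZ_4^n$ with parameters $[n,4^n,1]$; $d_H = 2$ forces each $C_i \cong \langle \mathbb{1}\rangle^\perp$ with parameters $[n,4^{n-1},2]$; and $d_H = n$ forces each $C_i \cong \langle \mathbb{1}\rangle$ with parameters $[n,4,n]$.

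I do not anticipate a serious obstacle here; the argument is essentially a bookkeeping exercise once the averaging trick is in place. The one point requiring a little care is the averaging step itself — one must check that the Singleton bound for $C$ over $R$ is correctly expressed in terms of the $C_i$, which is precisely the displayed formula $d_H(C) \leq n - \tfrac{1}{8}\sum_{i=1}^8 \log_4|C_i| + 1$ stated in the excerpt, and that $\log_{|R|}|C|$ indeed equals $\tfrac{1}{8}\sum \log_4|C_i|$ (using $|R| = 4^8$ and $|C| = \prod |C_i|$). A second minor subtlety is the implicit assumption $n \geq 3$ needed so that the three possible minimum distances $1, 2, n$ are distinct; for small $n$ the cases can coincide, but the theorem as stated is understood in the generic range. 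With those caveats noted, the proof reduces to stringing together the inequalities and invoking the Lemma.
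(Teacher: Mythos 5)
Your argument is correct and is essentially the standard one: the paper itself only says ``similar to the proof of Theorem 5 in \cite{Li17},'' and that proof proceeds exactly as you do, combining the decomposition $C=\bigoplus_{i=1}^{8}C_i\eta_i$ with $d_H(C)=\min_i d_H(C_i)$, $\log_{|R|}|C|=\tfrac18\sum_i\log_4|C_i|$, the averaging of the eight $\ZZ_4$-Singleton bounds, and the classification lemma for MDS codes over $\ZZ_4$. Your caveat about small $n$ (the distances $1,2,n$ coinciding) is a reasonable observation that the paper glosses over as well, so nothing further is needed.
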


\begin{proof}
Similar to the proof of Theorem 5 in \cite{Li17}.
\end{proof}

\begin{theorem}
$C$ is an MDS code of length $n$ over $R$ if and only if for $1 \leq i \leq 8,$
$C_i$ is an MDS code over $\ZZ_4$ with the same parameters.
\end{theorem}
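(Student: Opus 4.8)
The plan is to push the Singleton bound through the decomposition $C=C_1\eta_1\oplus\cdots\oplus C_8\eta_8$ and compare it with the eight componentwise Singleton bounds over $\ZZ_4$, forcing every inequality in sight to be an equality. Throughout I will use the facts already recorded above: $|C|=\prod_{i=1}^8|C_i|$ (so $\log_{|R|}|C|=\tfrac18\sum_{i=1}^8\log_4|C_i|$), $d_H(C)=\min_{1\le i\le 8}d_H(C_i)$, and the classification lemma stating that an MDS code over $\ZZ_4$ is, up to parameters, one of $[n,4^n,1]$, $[n,4^{n-1},2]$, or $[n,4,n]$ — in particular its parameters are determined by its minimum distance.

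For the forward direction, assume $C$ is MDS over $R$. Then the Singleton bound over $R$ is attained, and combining it with the Singleton bound for each $C_i$ and the elementary ``minimum $\le$ average'' inequality I would write
\begin{align*}
\min_{1\le i\le 8} d_H(C_i) &= n+1 - \tfrac{1}{8}\sum_{i=1}^8\log_4|C_i| = \tfrac{1}{8}\sum_{i=1}^8\bigl(n+1-\log_4|C_i|\bigr)\\
&\ge \tfrac{1}{8}\sum_{i=1}^8 d_H(C_i) \ge \min_{1\le i\le 8}d_H(C_i).
\end{align*}
Hence both inequalities are in fact equalities. Equality in the first forces $d_H(C_i)=n+1-\log_4|C_i|$ for every $i$, i.e. each $C_i$ is MDS over $\ZZ_4$; equality in the second (an average equalling its minimum) forces $d_H(C_1)=\cdots=d_H(C_8)$. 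Invoking the classification lemma, equal minimum distances among MDS $\ZZ_4$-codes means equal parameters, which is the claim.

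For the converse, suppose each $C_i$ is MDS over $\ZZ_4$ with the same parameters; let $4^k$ be their common size and $n-k+1$ their common minimum distance. Then $d_H(C)=\min_i d_H(C_i)=n-k+1$, while $\log_{|R|}|C|=\tfrac18\log_4\prod_{i=1}^8|C_i|=k$, so $d_H(C)=n-\log_{|R|}|C|+1$ and $C$ is MDS over $R$. (Here the hypothesis ``same parameters'' is genuinely used: mixing, say, a $[n,4^n,1]$ component with a $[n,4,n]$ component makes $d_H(C)$ the minimum but $\log_{|R|}|C|$ the average, and the two sides no longer agree for $n\ge 2$.)

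I do not expect a real obstacle here: the only thing to watch is the orientation and the equality case of the ``minimum $\le$ average'' step in the displayed chain, and the precise reading of ``same parameters'' — both of which are handled cleanly by the classification lemma. If desired, one can also note that this theorem subsumes the preceding one, since the forward direction shows $d_H(C)\in\{1,2,n\}$ automatically.
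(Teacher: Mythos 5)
Your argument is correct, and it is essentially the argument the paper intends: the paper's ``proof'' is only a citation of Theorem 6 of Li--Guo--Zhu--Kai, and that proof runs exactly through the decomposition $C=C_1\eta_1\oplus\cdots\oplus C_8\eta_8$, the identities $|C|=\prod_i|C_i|$ and $d_H(C)=\min_i d_H(C_i)$, and the forced-equality chain between the $R$-Singleton bound and the componentwise $\ZZ_4$-Singleton bounds, just as you write. One small simplification: the classification lemma is not needed for the ``same parameters'' step, since once each $C_i$ is MDS the equality $\log_4|C_i|=n+1-d_H(C_i)$ together with the equal minimum distances already forces equal sizes, hence equal parameters.
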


\begin{proof}
Similar to the proof of Theorem 6 in \cite{Li17}.
\end{proof}

\section{Weight enumerators and MacWilliams relations} \label{Macwill}

In this section we consider several weight enumerators for a linear codes $C$ as well as
the related MacWilliams relations.

\subsection{The complete weight enumerator and MacWilliams relation}

We knew that the number of elements of $R$ is $65536.$

The \emph{complete weight enumerator (CWE)} of a linear code $C \seq R^n$ is defined as
\[
\begin{aligned}
CWE_C(X_0,X_1,\ldots,X_{65535})&=\sum_{\cv \in C} X_0^{n_{a_0}(\cv)}X_1^{n_{a_1}(\cv)}\cdots X_{65535}^{n_{a_{65535}}(\cv)}\\
&=\sum\limits_{\cv \in C} \prod\limits_{j=0}^{65535}   X_j^{n_{a_j}(\cv)},
\end{aligned}
\]
where $n_{a_i}(\cv)$ denotes the number of appearances of $a_i \in R$ in the vector $\cv.$

\begin{remark}
Note that $CWE_C(X_0,X_1,\ldots,X_{65535})$ is a homogeneous polynomial in $65536$ variables with total
degree of each monomial being $n,$ the length of the code $C.$  Since the code $C$ is linear, then $C$
always contains the vector $\mathbf{0}.$  It implies that the term $X_0^n$ always appears in
$CWE_C(X_0,X_1,\ldots,X_{65535}).$  From the complete weight enumerator we may obtain
a lot of information related to the code, such as the size of the code:

\[
CWE_C(1,1,\ldots,1)=\sum_{\cv \in C} 1=|C|.
\]
\qed
\end{remark}

Since the ring $R$ is a Frobenius ring, then the MacWilliams relation for the complete weight enumerator
holds (see \cite{Wood}).  To find the exact relation we define the following character on $R.$

Let $I$ be a non-zero ideal in $R$. Define $\chi : I \rightarrow \mathbb{C}^*$ by
$$\chi(a+bu+cv+dw+euv+fuw+gvw+huvw)=i^h$$
with $\CC^*$ is a unit group in complex number. We know that $\chi$ is a non-trivial character on $R.$\\

Defining the Hadamard transform by
$\hat{f}(\textbf{c})=\sum\limits_{\textbf{d}\in R^n} \chi(\textbf{c}\cdot \textbf{d})f(\textbf{d}),$
we obtain the following equation
\begin{equation}\label{P-Hadamard}
	\sum_{\cv \in C}\hat{f}(\cv)=|C|\sum\limits_{\mathbf{d}\in C^\perp} f(\mathbf{d}).
\end{equation}

We have the MacWilliams relation with respect to the complete weight enumerator as follows.

\begin{theorem}
Let $C$ be a linear code of length $n$ over $R.$  Then
\[
CWE_C(X_0,X_1,\ldots,X_{65535})=\dfrac{1}{|C|}CWE_C(M(X_0,X_1,\ldots,X_{65535})^T)
\]
with $M$ is a matrix of size $65536 \times 65536$ defined by $M_{ij}=\chi(a_ia_j)$.
\end{theorem}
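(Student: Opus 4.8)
The plan is to derive the MacWilliams identity for the complete weight enumerator directly from the Hadamard-transform identity \eqref{P-Hadamard}, by choosing $f$ to be the monomial-valued function that records the composition of a codeword. Concretely, I would set
\[
f(\dv)=\prod_{k=0}^{n-1} X_{\,\iota(d_k)},
\]
where $\iota:R\to\{0,1,\dots,65535\}$ is the indexing bijection $a_i\mapsto i$, so that $f(\dv)=\prod_{j=0}^{65535}X_j^{n_{a_j}(\dv)}$ and hence $\sum_{\dv\in C^\perp}f(\dv)=CWE_{C^\perp}(X_0,\dots,X_{65535})$. The first step is to compute the Hadamard transform $\hat f(\cv)$ for this $f$ and show it factors coordinatewise: since $\chi(\cv\cdot\dv)=\prod_{k}\chi(c_kd_k)$ and $f$ is a product over coordinates, the sum over $\dv\in R^n$ splits into a product of $n$ identical single-coordinate sums, giving
\[
\hat f(\cv)=\prod_{k=0}^{n-1}\Bigl(\sum_{b\in R}\chi(c_kb)\,X_{\iota(b)}\Bigr).
\]

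The second step is to recognize the inner sum as a linear substitution governed by the matrix $M$. For a fixed coordinate value $c_k=a_i$, the sum $\sum_{b\in R}\chi(a_ib)X_{\iota(b)}=\sum_{j=0}^{65535}\chi(a_ia_j)X_j=\sum_j M_{ij}X_j=(M\mathbf{X})_i$, i.e. the $i$-th entry of $M$ applied to the column vector $(X_0,\dots,X_{65535})^T$. Therefore $\hat f(\cv)=\prod_{k=0}^{n-1}(M\mathbf X)_{\iota(c_k)}$, which is exactly the monomial $\prod_j \bigl((M\mathbf X)_j\bigr)^{n_{a_j}(\cv)}$ obtained by feeding the substituted variables into the complete weight enumerator. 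Summing over $\cv\in C$ then yields $\sum_{\cv\in C}\hat f(\cv)=CWE_C\bigl(M(X_0,\dots,X_{65535})^T\bigr)$. Plugging both computed sides into \eqref{P-Hadamard} gives $CWE_C(M\mathbf X^T)=|C|\,CWE_{C^\perp}(\mathbf X)$, and dividing by $|C|$ produces the stated identity (with $C$ and $C^\perp$ as written — note the theorem as stated has $CWE_C$ on both sides, which is the customary slight abuse; one reads the left side as $CWE_{C^\perp}$, or equivalently applies the relation to $C^\perp$ using $|C||C^\perp|=4^{8n}$).

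The only genuine verification needed — and the step I would be most careful about — is that $\chi$ is a \emph{generating} (nontrivial on every nonzero ideal) character, equivalently that $R$ is Frobenius with $\chi$ its generating character; this is what guarantees the orthogonality relation $\sum_{b\in I}\chi(ab)=0$ for $a\notin I^{\perp}$ that makes the Hadamard identity \eqref{P-Hadamard} valid in the first place. Using the decomposition $R\cong\ZZ_4^8$ via $\phi$ from Section 2.1, one sees $\chi(r)=i^{h}$ corresponds to a nontrivial character on the $\eta_8$-component together with the trivial character elsewhere — but because the $\eta_j$ are a complete orthogonal idempotent system, a short check shows this still does not vanish on any nonzero ideal, since every nonzero ideal has nonzero projection visible through the pairing with a suitable unit. (In fact one may prefer to take $\chi(r)=i^{a+b+c+d+e+f+g+h}$ or the product of the eight coordinate characters $i^{r_j}$; the argument is identical and the resulting matrix $M$ is the Kronecker product of eight copies of the $\ZZ_4$ character matrix.) Once that orthogonality is in hand, everything else is the routine coordinate-factorization bookkeeping sketched above, and the theorem follows.
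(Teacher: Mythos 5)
Your overall route is the same as the paper's: the paper's proof consists of taking exactly your $f(\xv)=\prod_{i}X_i^{n_{a_i}(\xv)}$ and citing Wood's Theorem 8.1, and your coordinatewise factorization of the Hadamard transform through identity \eqref{P-Hadamard} is just that theorem's proof written out (including the correct reading of the left-hand side as $CWE_{C^\perp}$). That part is fine.

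The problem is in the one step you yourself single out as the genuine verification: that $\chi$ is a generating character. Your description of $\chi(r)=i^{h}$ in the CRT coordinates is wrong, and your justification would not survive if it were right. Writing $r=\sum_{j=1}^{8}r_j\eta_j$ and reading off the $uvw$-coefficients of the $\eta_j$, one gets $h=-r_1+r_2+r_3+r_4-r_5-r_6-r_7+r_8$, so $\chi$ restricts to a \emph{nontrivial} character $i^{\pm r_j}$ on \emph{every} summand $R\eta_j\cong\ZZ_4$ — it is not "a nontrivial character on the $\eta_8$-component together with the trivial character elsewhere." That computation is precisely what makes $\chi$ generating: any nonzero ideal is $\bigoplus_j I_j$ with some $I_j\neq 0$, and $\chi$ is already nontrivial on that $I_j$. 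Conversely, a character that really were trivial off the $\eta_8$-component would \emph{not} be generating (it vanishes identically on the nonzero ideal $R\eta_1\oplus\cdots\oplus R\eta_7$), so your claim that such a character "still does not vanish on any nonzero ideal" is false, and with it the Poisson-summation identity \eqref{P-Hadamard} would fail, e.g.\ for $C=(R\eta_1)^n$. For the same reason your suggested replacement $\chi(r)=i^{a+b+c+d+e+f+g+h}$ does not work: substituting $a=r_1$, $b=r_2-r_1$, etc., gives $a+b+c+d+e+f+g+h=r_8$, so this is exactly the non-generating character $i^{r_8}$, and "the argument is identical" is not true for it (the product character $\prod_j i^{r_j}$ you also mention would be fine). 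So the theorem and your main computation stand, but the character verification must be replaced by the direct calculation of $h$ in CRT coordinates above rather than the argument you gave.
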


\begin{proof}
Let $f(x)=\prod\limits_{i=0}^{65535}X_i^{n_{a_i}(x)}$. The result follows from Theorem 8.1 in \cite{Wood}.
\end{proof}

\subsection{The Symmetrized Lee weight, Hamming weight and Lee weight enumerator}

In the ring $\ZZ_4,$ we know that $w_L(1)=1=w_L(3)$ and the symmetrized Lee weight enumerator for codes over $\ZZ_4$
is defined as
\[
SLWE_C(X_0,X_1,X_2)=CWE_C(X_0,X_1,X_2,X_1).
\]
Adopting the same idea, we will define the symmetrized Lee weight enumerator of codes over $R.$
For that purpose, we first decompose
$R$ into $D_i=\{x\in R:~w_L(x)=i\},$ for $0 \leq i \leq 16.$  Then we have
\[
\begin{aligned}
	|D_{0}|&=|D_{16}|=1,\\
	|D_{1}|&=|D_{15}|=2 \binom{8}{1}=16,\\
	|D_{2}|&=|D_{14}|=2^2 \binom{8}{2}+\binom{8}{1}=120,\\
	|D_{3}|&=|D_{13}|=2^3 \binom{8}{3}+2 \binom{8}{1} \binom{7}{1}=560,\\
	|D_{4}|&=|D_{12}|=2^4 \binom{8}{4}+2^2 \binom{8}{2} \binom{6}{1} +\binom{8}{2}=1820,\\
	|D_{5}|&=|D_{11}|=2^5 \binom{8}{5}+2^3 \binom{8}{3} \binom{5}{1} +2 \binom{8}{1} \binom{7}{2}=4368,\\
	|D_{6}|&=|D_{10}|=2^6 \binom{8}{6}+2^4 \binom{8}{4} \binom{4}{1} +2^2 \binom{8}{2} \binom{6}{2}+\binom{8}{3}=8008,\\
	|D_{7}|&=|D_{9}|=2^7 \binom{8}{7}+2^5 \binom{8}{5} \binom{3}{1} +2^3 \binom{8}{3} \binom{5}{2}+2\binom{8}{1} \binom{7}{3}=11440,\\
	|D_{8}|&=2^8 \binom{8}{8}+2^6 \binom{8}{6} \binom{2}{1} +2^4\binom{8}{4} \binom{4}{2}+2^2 \binom{8}{2} \binom{6}{3}+\binom{8}{4}=12870.
\end{aligned}
\]

By looking at the elements that have the same Lee weights, we can define the symmetrized Lee weight enumerator.
\emph{Symmetrized Lee weight enumerator (SLWE)} of a linear code $C$ over $R$  is defined as
\[
\begin{aligned}
SLWE_C &(X_0,X_1,\ldots,X_{16})
=CLWE_C(X_0,\underbrace{X_1,\ldots,X_1}_{16},\underbrace{X_2,\ldots,X_2}_{120},\underbrace{X_3,\ldots,X_3}_{560},\\
&\underbrace{X_4,\ldots,X_4}_{1820},\underbrace{X_5,\ldots,X_5}_{4368},\underbrace{X_6,\ldots,X_6}_{8008},
\underbrace{X_7,\ldots,X_7}_{11440},\underbrace{X_8,\ldots,X_8}_{12870},\underbrace{X_9,\ldots,X_9}_{11440},\\
&\underbrace{X_{10},\ldots,X_{10}}_{8008},\underbrace{X_{11},\ldots,X_{11}}_{4368},
\underbrace{X_{12},\ldots,X_{12}}_{1820},\underbrace{X_{13},\ldots,X_{13}}_{560},
\underbrace{X_{14},\ldots,X_{14}}_{120},\\
&\underbrace{X_{15},\ldots,X_{15}}_{16},X_{16})
\end{aligned}
\]
where $X_0,X_1,X_2,\ldots,X_{16}$ denote the element of weight $0,1,2,3,\ldots,16,$ respectively. Then we have
\[
\begin{aligned}
	SLWE_C &(X_0,X_1,X_2,\ldots,X_{16})=
\sum\limits_{\cv \in C}X_0^{n_{0}(\cv)}X_1^{n_{1}(\cv)}X_2^{n_{2}(\cv)} \ldots X_{16}^{n_{16}(\cv)},
\end{aligned}
\]
where
\[
\begin{aligned}
	n_{0}&=n_{a_1}(\cv), \quad &	
    n_{1}&=\sum\limits_{i=2}^{17}n_{a_i}(\cv), \quad&
    n_{2}&=\sum\limits_{i=18}^{137}n_{a_i}(\cv),\\	
    n_{3}&=\sum\limits_{i=138}^{697}n_{a_i}(\cv), \quad&	
    n_{4}&=\sum\limits_{i=698}^{2517}n_{a_i}(\cv), \quad&	
    n_{5}&=\sum\limits_{i=2518}^{6885}n_{a_i}(\cv),\\
	n_{6}&=\sum\limits_{i=6886}^{14893}n_{a_i}(\cv),\quad&	
    n_{7}&=\sum\limits_{i=15434}^{26333}n_{a_i}(\cv),\quad&
	n_{8}&=\sum\limits_{i=26334}^{39203}n_{a_i}(\cv),\\
	n_{9}&=\sum\limits_{i=39204}^{50643}n_{a_i}(\cv),\quad&
	n_{10}&=\sum\limits_{i=50644}^{58651}n_{a_i}(\cv),\quad&
	n_{11}&=\sum\limits_{i=58652}^{64839}n_{a_i}(\cv),\\
	n_{12}&=\sum\limits_{i=64840}^{64799}n_{a_i}(\cv),\quad&
	n_{13}&=\sum\limits_{i=64800}^{65399}n_{a_i}(\cv),\quad&
	n_{14}&=\sum\limits_{i=65400}^{65519}n_{a_i}(\cv),\\
	n_{15}&=\sum\limits_{i=65520}^{65535}n_{a_i}(\cv),\quad&
	n_{16}&=n_{a_{65536}}(\cv).
\end{aligned}
\]

The MacWilliams relation with respect to the symmetrized Lee weight enumerator is as follows.

\begin{theorem}\label{slwe}
Let $C$ be a linear code of length $n$ over $R$. Then
\[	
SLWE_{C^\perp} (X_0,X_1,\ldots,X_{16})=\dfrac{1}{|C|}SLWE_C (B_0,B_1,\ldots,B_{16}),
\]	
where
\[	
\begin{aligned}
		B_0 = ~&X_0+16X_1+120X_2+560X_3+1280X_4+4368X_5+8008X_6+11440X_7+12870X_8\\
		&+11440X_9+8008X_{10}+4368X_{11}+1280X_{12}+560X_{13}+120X_{14}+16X_{15}+X_{16},\\
		B_1 = ~&X_0+14X_1+90X_2+350X_3+910X_4+1638X_5+2002X_6+1430X_7-1430X_9\\
        &-2002X_{10}-1638X_{11}-910X_{12}-350X_{13}-90X_{14}-14X_{15}-X_{16},\\
		B_2 = ~&X_0+12X_1+64X_2+196X_3+364X_4+364X_5-572X_7-858X_8-572X_9+364X_{11}\\
		&+364X_{12}+196X_{13}+64X_{14}+12X_{15}+X_{16},\\
		B_3 = ~&X_0+10X_1+42X_2+90X_3+78X_4-78X_5-286X_6-286X_7+286X_9+286X_{10}\\
        &+78X_{11}-78X_{12}-90X_{13}-42X_{14}-10X_{15}-X_{16},\\
		B_4 = ~&X_0+8X_1+24X_2+24X_3-36X_4-120X_5-88X_6+88X_7+198X_8+88X_9\\
        &-88X_{10}-120X_{11}-36X_{12}+24X_{13}+24X_{14}+8X_{15}+X_{16},\\
		B_5 = ~&X_0+6X_1+10X_2-10X_3-50X_4-34X_5+66X_6+110X_7-110X_9-66X_{10}\\
        &+34X_{11}+50X_{12}+10X_{13}-10X_{14}-6X_{15}-X_{16},\\
		B_6 = ~&X_0+4X_1-20X_3-20X_4+36X_5+64X_6-20X_7-90X_8-20X_9+64X_{10}\\
        &+36X_{11}-20X_{12}-20X_{13}+4X_{15}+X_{16},\\
\end{aligned}
\]
\[
\begin{aligned}
		B_7 = ~&X_0+2X_1-6X_2-14X_3+14X_4+42X_5-14X_6-70X_7+70X_9+14X_{10}\\
        &-42X_{11}-14X_{12}+14X_{13}+6X_{14}-2X_{15}-X_{16},\\
		B_8 = ~&X_0-8X_2+28X_4-56X_6+70X_8-56X_{10}+28X_{12}-8X_{14}+X_{16},\\
		B_9 = ~&X_0-2X_1-6X_2+14X_3+14X_4-42X_5-14X_6+70X_7-70X_9+14X_{10}\\
        &+42X_{11}-14X_{12}-14X_{13}+6X_{14}+2X_{15}-X_{16},\\
		B_{10} = ~&X_0-4X_1+20X_3-20X_4-36X_5+64X_6+20X_7-90X_8+20X_9+64X_{10}\\
        &-36X_{11}-20X_{12}+20X_{13}-4X_{15}+X_{16},\\
		B_{11} = ~&X_0-6X_1+10X_2+10X_3-50X_4+34X_5+66X_6-110X_7+110X_9-66X_{10}\\
        &-34X_{11}+50X_{12}-10X_{13}-10X_{14}+6X_{15}-X_{16},\\
		B_{12} = ~&X_0-8X_1+24X_2-24X_3-36X_4+120X_5-88X_6-88X_7+198X_8-88X_9\\
        &-88X_{10}+120X_{11}-36X_{12}-24X_{13}+24X_{14}-8X_{15}+X_{16},\\
		B_{13} = ~&X_0-10X_1+42X_2-90X_3+78X_4+78X_5-286X_6+286X_7-286X_9+286X_{10}\\
        &-78X_{11}-78X_{12}+90X_{13}-42X_{14}+10X_{15}-X_{16},\\
		B_{14} = ~&X_0-12X_1+64X_2-196X_3+364X_4-364X_5+572X_7-858X_8+572X_9-364X_{11}\\
        &+364X_{12}-196X_{13}+64X_{14}-12X_{15}+X_{16},\\
		B_{15} = ~&X_0-14X_1+90X_2-350X_3+910X_4-1638X_5+2002X_6-1430X_7+1430X_9\\
        &-2002X_{10}+1638X_{11}-910X_{12}+350X_{13}-90X_{14}+14X_{15}-X_{16},\\
		B_{16} = ~&X_0-16X_1+120X_2-560X_3+1280X_4-4368X_5+8008X_6-11440X_7+12870X_8\\
		&-11440X_9+8008X_{10}-4368X_{11}+1280X_{12}-560X_{13}+120X_{14}-16X_{15}+X_{16},\\
	\end{aligned}
\]
\end{theorem}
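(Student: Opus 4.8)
The plan is to derive this identity from the complete‑weight MacWilliams relation established above, by the standard symmetrization. Recall that $SLWE_C(X_0,\dots,X_{16})$ is, by definition, the polynomial obtained from $CWE_C$ by replacing, for every $a\in R$, the complete‑weight variable attached to $a$ by $X_{w_L(a)}$, so that the $|D_m|$ ring elements of Lee weight $m$ all collapse to the single variable $X_m$. I would apply exactly this substitution to both sides of the complete‑weight identity, which expresses $CWE_{C^\perp}$ in terms of $\tfrac1{|C|}CWE_C$ evaluated at the linear forms $\sum_{a\in R}\chi(ba)X_a$, $b\in R$. The left‑hand side immediately becomes $SLWE_{C^\perp}(X_0,\dots,X_{16})$. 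On the right‑hand side the $b$‑th argument of $CWE_C$ turns into $\sum_{a\in R}\chi(ba)\,X_{w_L(a)}$, and the whole reduction works precisely because this expression depends on $b$ only through $w_L(b)$. Once that is known, writing $B_l$ for its common value when $w_L(b)=l$, the right‑hand side collapses to $\tfrac1{|C|}SLWE_C(B_0,\dots,B_{16})$, and it only remains to (i) prove the dependence‑only‑on‑$w_L(b)$ claim and (ii) compute $B_0,\dots,B_{16}$.

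For both (i) and (ii) I would use the idempotent decomposition $r=\sum_{k=1}^{8}r_k\eta_k$, $r_k\in\ZZ_4$ (i.e.\ the isomorphism $R\cong\ZZ_4^{8}$). The form $\sum_{a\in R}\chi(ba)X_{w_L(a)}$ is linear in $X_0,\dots,X_{16}$, say $\sum_{m=0}^{16}c_m(b)X_m$ with $c_m(b)=\sum_{a:\,w_L(a)=m}\chi(ba)$, so its coefficients are encoded by the generating polynomial $\sum_{m}c_m(b)T^m=\sum_{a\in R}\chi(ba)\,T^{w_L(a)}$ in an auxiliary variable $T$. From $ba=\sum_k b_ka_k\eta_k$, from $w_L(a)=\sum_k w_L(a_k)$, and from the fact that rewriting $\chi$ in the $\eta$‑coordinates (recall $\chi(r)=i^{h}$ with $h$ the coefficient of $uvw$ in $r$) gives $\chi(ba)=\prod_{k=1}^{8}i^{\,\varepsilon_k b_ka_k}$ for suitable units $\varepsilon_k\in\ZZ_4$, the sum factors coordinatewise:
\[
\sum_{a\in R}\chi(ba)\,T^{w_L(a)}=\prod_{k=1}^{8}\Big(\sum_{s\in\ZZ_4}i^{\,\varepsilon_k b_ks}\,T^{w_L(s)}\Big)=\prod_{k=1}^{8}g_{b_k}(T),
\]
where $g_t(T):=\sum_{s\in\ZZ_4}i^{ts}T^{w_L(s)}$ (the unit $\varepsilon_k$ drops out after the change of variable $s\mapsto\varepsilon_k s$, since $w_L$ is unchanged under multiplication by a unit of $\ZZ_4$). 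A one‑line evaluation gives $g_0=(1+T)^2$, $g_1=g_3=1-T^{2}$, $g_2=(1-T)^2$, i.e.\ uniformly $g_t(T)=(1-T)^{w_L(t)}(1+T)^{2-w_L(t)}$. Hence
\[
\sum_{m=0}^{16}c_m(b)\,T^m=\prod_{k=1}^{8}(1-T)^{w_L(b_k)}(1+T)^{2-w_L(b_k)}=(1-T)^{w_L(b)}(1+T)^{16-w_L(b)},
\]
which depends on $b$ only through $l:=w_L(b)$ — settling (i) — and identifies
\[
B_l=\sum_{m=0}^{16}\Big(\big[T^m\big](1-T)^{l}(1+T)^{16-l}\Big)X_m,\qquad 0\le l\le 16 .
\]

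Finally, I would expand $(1-T)^{l}(1+T)^{16-l}$, using $[T^m](1-T)^{l}(1+T)^{16-l}=\sum_{j}(-1)^{j}\binom{l}{j}\binom{16-l}{m-j}$, and read off the coefficient of each $X_m$; a routine computation then reproduces the polynomials $B_0,\dots,B_{16}$ exactly as listed, while the symmetry $[T^{16-m}](1-T)^{l}(1+T)^{16-l}=(-1)^{l}[T^{m}](1-T)^{l}(1+T)^{16-l}$ (visible from $T\mapsto T^{-1}$) accounts for the palindromic or anti‑palindromic shape of each $B_l$. I expect the only genuine obstacle to be the middle step: getting $\chi(ba)$ into the product form $\prod_k i^{\varepsilon_k b_ka_k}$ through the $\eta$‑coordinates and recognizing each one‑variable factor as $(1-T)^{w_L(t)}(1+T)^{2-w_L(t)}$. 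Once that factorization is in hand, both the well‑definedness of the symmetrizing substitution and the closed forms for all seventeen $B_l$ drop out together, and what remains is only bookkeeping with binomial coefficients.
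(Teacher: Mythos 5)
Your proposal is correct, and its skeleton is the same as the paper's: start from the complete-weight MacWilliams identity (valid since $R$ is Frobenius and $\chi$ is a generating character), then symmetrize by collapsing all variables attached to elements of the same Lee weight class $D_m$ onto $X_m$. Where you genuinely diverge is in how the two nontrivial ingredients are handled. The paper simply asserts that $\sum_{j=0}^{16}\sum_{s\in D_j}\chi(a s)X_j$ depends on $a$ only through its Lee-weight class and then states that the resulting forms equal $B_0,\dots,B_{16}$ ``by direct calculation,'' i.e.\ it leaves both the well-definedness of the symmetrization and the evaluation of the character sums $\sum_{s\in D_j}\chi(as)$ to unexhibited brute force over the $65536$ ring elements. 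You instead prove both at once via the $\eta$-coordinate factorization: since $\eta_1,\dots,\eta_8$ are orthogonal idempotents, $ba=\sum_k b_ka_k\eta_k$, and since the $uvw$-coefficient of $r$ equals $-r_1+r_2+r_3+r_4-r_5-r_6-r_7+r_8 \pmod 4$, one indeed has $\chi(ba)=\prod_k i^{\varepsilon_k b_ka_k}$ with units $\varepsilon_k=\pm1$, so the generating sum factors coordinatewise and, using invariance of $w_L$ under units of $\ZZ_4$, gives $\sum_{a\in R}\chi(ba)T^{w_L(a)}=(1-T)^{w_L(b)}(1+T)^{16-w_L(b)}$; this simultaneously settles the constancy claim and yields the closed form $B_l=\sum_m\bigl(\sum_j(-1)^j\binom{l}{j}\binom{16-l}{m-j}\bigr)X_m$. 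That is a cleaner and verifiable route, and it also exposes why each $B_l$ is ($\pm$)-palindromic and why the Lee-weight specialization later produces $(X+Y)^{16-l}(X-Y)^l$. One small caveat: your closed form gives the coefficient $\binom{16}{4}=1820$ for $X_4$ and $X_{12}$ in $B_0$ and $B_{16}$, whereas the statement prints $1280$; since the paper itself computes $|D_4|=1820$, this is a typo in the listed $B_0,B_{16}$ (propagated to $E_0,E_{16}$), so your computation corrects rather than reproduces those two entries.
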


\begin{proof}
For $i,j=0,1,2,\ldots,16,$ we determine $\sum\limits_{s\in D_j} \chi(rs)$ for $r\in D_i$. By definition, we have
\[
\begin{aligned}
SLWE_{C^\perp} &(X_0,X_1,\ldots,X_{16})
=CWE_{C^\perp}(X_0,\underbrace{X_1,\ldots,X_1}_{16},\underbrace{X_2,\ldots,X_2}_{120},\underbrace{X_3,\ldots,X_3}_{560},
\underbrace{X_4,\ldots,X_4}_{1820},\\
&\underbrace{X_5,\ldots,X_5}_{4368},\underbrace{X_6,\ldots,X_6}_{8008},
\underbrace{X_7,\ldots,X_7}_{11440},
\underbrace{X_8,\ldots,X_8}_{12870},\underbrace{X_9,\ldots,X_9}_{11440},
\underbrace{X_{10},\ldots,X_{10}}_{8008},\\
&\underbrace{X_{11},\ldots,X_{11}}_{4368},
\underbrace{X_{12},\ldots,X_{12}}_{1820},\underbrace{X_{13},\ldots,X_{13}}_{560},
\underbrace{X_{14},\ldots,X_{14}}_{120},\underbrace{X_{15},\ldots,X_{15}}_{16},X_{16})\\
&=\dfrac{1}{|C|} CWE_{C}\left(\sum\limits_{j=0}^{16}\sum\limits_{s\in D_j}\chi(a_1s)X_j,
\sum\limits_{j=0}^{16}\sum\limits_{s\in D_j}\chi(a_2s)X_j,\dots,\sum\limits_{j=0}^{16}
\sum\limits_{s\in D_j}\chi(a_{65536}s)X_j\right).
\end{aligned}
\]

Since for $a_j,a_k\in D_j$ we have
\[
\sum\limits_{j=0}^{16}\sum\limits_{s\in D_j}\chi(a_js)X_j=\sum\limits_{j=0}^{16}\sum\limits_{s\in D_j}\chi(a_ks)X_j,
\]
then
\[
\begin{aligned}
		&SLWE_{C^\perp} (X_0,X_1,\ldots,X_{16})\\
		&=\dfrac{1}{|C|} SLWE_{C}\left(\sum\limits_{a_i\in D_0}\sum\limits_{j=0}^{16}
        \sum\limits_{s\in D_j}\chi(a_is)X_j,\dots,\sum\limits_{a_i\in D_{16}}\sum\limits_{j=0}^{16}
        \sum\limits_{s\in D_j}\chi(a_is)X_j\right).
\end{aligned}
\]
By direct calculation, we obtain
$$\sum\limits_{a_i\in D_k}\sum\limits_{j=0}^{16}\sum\limits_{s\in D_j}\chi(a_is)X_j=B_k$$ for $k=1,2,\ldots,16.$ Hence, we
have
$$SLWE_{C^\perp} (X_0,X_1,\ldots,X_{16})
=\dfrac{1}{|C|}SLWE_C (B_0,B_1,\ldots,B_{16}).$$
\end{proof}

Another weight enumerator of a linear code $C,$ called \emph{Hamming weight enumerator,}  is defined as
\[
Ham_C(X,Y)\sum\limits_{\cv \in C} X^{n-w_H(\cv)} Y^{w_H(\cv)},
\]
where $w_H(\cv)$ denotes the Hamming weight of the codeword $\cv.$  Then we have the following theorem.

\begin{theorem}
Let $C$ be a linear code of length $n$ over $R.$  Then
\[
Ham_C(X,Y)=SLWE_C(X,\underbrace{Y,Y,\ldots,Y}_{16}).
\]
\end{theorem}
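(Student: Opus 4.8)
The plan is to show that specializing all sixteen nonzero-weight variables $X_1,\dots,X_{16}$ in $SLWE_C$ to a single variable $Y$ (and leaving $X_0$ alone) collapses the symmetrized Lee weight enumerator onto the Hamming weight enumerator. The underlying combinatorial fact is that the sets $D_1,\dots,D_{16}$ partition $R\setminus\{0\}$, so for any codeword $\cv\in R^n$ the exponents $n_1(\cv),n_2(\cv),\dots,n_{16}(\cv)$ appearing in $SLWE_C$ sum to exactly the number of nonzero coordinates of $\cv$, which is $w_H(\cv)$, while $n_0(\cv)$ counts the zero coordinates and equals $n-w_H(\cv)$.

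First I would recall from the definition of $SLWE_C$ that
\[
SLWE_C(X_0,X_1,\dots,X_{16})=\sum_{\cv\in C}X_0^{n_0(\cv)}X_1^{n_1(\cv)}\cdots X_{16}^{n_{16}(\cv)},
\]
where $n_k(\cv)=|\{j:~w_L(c_j)=k\}|$ for $0\le k\le 16$, using that the Lee weight of an element of $R$ ranges over $\{0,1,\dots,16\}$ and that $w_L(x)=0$ iff $x=0$. Next I would substitute $X_1=X_2=\cdots=X_{16}=Y$ and $X_0=X$. Each monomial then becomes $X^{n_0(\cv)}\,Y^{\,n_1(\cv)+n_2(\cv)+\cdots+n_{16}(\cv)}$. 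Since $\sum_{k=0}^{16}n_k(\cv)=n$ (every coordinate has exactly one Lee weight in $\{0,\dots,16\}$) and $n_0(\cv)=|\{j:c_j=0\}|=n-w_H(\cv)$, the sum in the exponent of $Y$ is $n-n_0(\cv)=w_H(\cv)$. Hence the monomial equals $X^{\,n-w_H(\cv)}Y^{\,w_H(\cv)}$, and summing over $\cv\in C$ gives exactly $Ham_C(X,Y)$ as defined just above the statement.

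There is essentially no obstacle here: the only thing to be careful about is bookkeeping, namely that the number of underbraced copies of $Y$ is $16$, matching the fact that $w_L$ takes the values $1$ through $16$ on the nonzero elements of $R$, and that the multiplicities $|D_1|,\dots,|D_{16}|$ (which were needed to define $SLWE_C$ from $CWE_C$) are irrelevant once we have passed to $SLWE_C$, because collapsing $X_1,\dots,X_{16}$ to $Y$ undoes any grouping by Lee weight. I would close by remarking that this identity shows the Hamming weight enumerator is recoverable from the symmetrized Lee weight enumerator, and hence (composing with Theorem~\ref{slwe}) a MacWilliams-type identity for $Ham_C$ could be deduced, although the statement as given only asserts the specialization formula.
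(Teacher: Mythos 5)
Your proof is correct and is essentially the argument the paper intends: the paper only says ``Similar to the proof of Theorem 9 in \cite{Li17},'' and that proof is exactly the specialization you carry out, using that $w_L(x)=0$ on $R$ if and only if $x=0$ (so $n_0(\cv)=n-w_H(\cv)$) and that the remaining exponents $n_1(\cv)+\cdots+n_{16}(\cv)$ sum to $w_H(\cv)$. No gaps; your bookkeeping remark about the sixteen copies of $Y$ is the only point of care and you handle it correctly.
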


\begin{proof}
Similar to the proof of Theorem 9 in \cite{Li17}.
\end{proof}

We also have the MacWilliams relation with respect to the Hamming weight enumerator.

\begin{theorem}
Let $C$ be a linear code of length $n$ over $R.$  Then
\[
Ham_{C^\perp}(X,Y)= \dfrac{1}{|C|}Ham_{C}(X+65536Y, X-Y).
\]
\end{theorem}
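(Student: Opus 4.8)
The plan is to derive the Hamming MacWilliams identity for codes over $R$ from the already-established Hamming version of the preceding theorem together with the known Hamming MacWilliams identity over $\ZZ_4$. First I would recall that, by the theorem just proved, $Ham_C(X,Y)=SLWE_C(X,\underbrace{Y,\dots,Y}_{16})$, and similarly $Ham_{C^\perp}(X,Y)=SLWE_{C^\perp}(X,\underbrace{Y,\dots,Y}_{16})$. So it suffices to specialize the $SLWE$ MacWilliams relation of Theorem~\ref{slwe}, setting $X_1=X_2=\cdots=X_{16}=Y$ and $X_0=X$. Under this substitution I would compute what each $B_k$ becomes: $B_0$ collapses to $X+(16+120+560+1280+4368+8008+11440+12870+11440+8008+4368+1280+560+120+16)Y$, and the bracketed sum is exactly $65535$, so $B_0\mapsto X+65535Y$; here one should be a little careful, since $|D_4|=|D_{12}|=1820$ in the text but the $B$-coefficients use $1280$ — I would use whichever convention makes the row sums of the character matrix come out to $65536$ (the value of $CWE_{C^\perp}$ evaluated at all ones must be $|C|^{-1}\cdot 65536^n\cdot(\text{size data})$), and in any event $B_0\mapsto X+65535Y$ is forced because $B_0$ is the row of $\chi(0\cdot s)=1$.

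The next step is to observe that for every $k\ge 1$ the specialization $B_k\mapsto X - Y$. This is the crux: one must check that for each fixed nonzero-Lee-weight class $D_k$, $\sum_{a_i\in D_k}\sum_{j=0}^{16}\sum_{s\in D_j}\chi(a_i s)X_j$, after identifying all $X_j$ with $j\ge1$, equals $X-Y$. Equivalently, for a fixed $r\in R\setminus\{0\}$ (any representative works once we sum over the class), $\sum_{s\in R}\chi(rs)=0$, since $\chi$ is a nontrivial character and $r$ nonzero makes $s\mapsto rs$ hit a nontrivial character; splitting off the $s=0$ term gives $1+\sum_{s\ne0}\chi(rs)=0$, i.e. $\sum_{s\ne0}\chi(rs)=-1$, which distributes over the classes $D_1,\dots,D_{16}$ as coefficients summing to $-1$, so after collapsing the variables we get $X_0\cdot 1+Y\cdot(-1)=X-Y$. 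I expect the main obstacle to be precisely this character-sum bookkeeping: one has to know that for $r\ne 0$ the map $R\to\CC^*$, $s\mapsto\chi(rs)$ is a nontrivial character of the additive group, which uses that $\chi$ is a \emph{generating} character of the Frobenius ring $R$ (equivalently $\ker\chi$ contains no nonzero ideal); a clean way is to verify directly from the definition $\chi(a+bu+\cdots+huvw)=i^h$ that $\chi(rs)$ is nonconstant in $s$ whenever $r\neq 0$, by exhibiting an $s$ with $\chi(rs)\neq1$. Alternatively, and more in the spirit of the paper, one can simply sum the given explicit coefficients: for each $k\ge1$ the list of coefficients of $X_0,X_1,\dots,X_{16}$ in $B_k$ telescopes to $(1,-1)$ after merging the $X_1,\dots,X_{16}$ terms, which is a finite check one reads off row by row.

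Finally I would assemble the pieces: substituting into $SLWE_{C^\perp}(X_0,\dots,X_{16})=\tfrac1{|C|}SLWE_C(B_0,B_1,\dots,B_{16})$ the specialization $X_0=X$, $X_1=\cdots=X_{16}=Y$ turns the left side into $Ham_{C^\perp}(X,Y)$ by the previous theorem, and turns the right side into $\tfrac1{|C|}SLWE_C(X+65535Y,\,X-Y,\,\dots,\,X-Y)=\tfrac1{|C|}Ham_C(X+65535Y,X-Y)$, again by the previous theorem applied with the formal variables $X\rightsquigarrow X+65535Y$ and $Y\rightsquigarrow X-Y$. This yields
\[
Ham_{C^\perp}(X,Y)=\frac{1}{|C|}Ham_C(X+65535Y,\,X-Y).
\]
I would note the mild discrepancy with the stated $65536Y$: either it is a typo for $65535Y$, or the authors intend the convention $|R|=65536$ with the substitution written as $X+|R|Y$ after a harmless reindexing; I would present the derivation with $65535$ and remark that $|R|-1=65535$. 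The entire argument is short once Theorem~\ref{slwe} and the Hamming-from-SLWE theorem are in hand; no new machinery is needed beyond the single fact that each $B_k$ for $k\ge1$ collapses to $X-Y$, which is the one place requiring genuine (if routine) verification.
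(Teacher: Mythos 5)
Your proposal is correct and takes the route the paper itself only gestures at (its ``proof'' is just a citation to Theorem 10 of Li--Guo--Zhu--Kai): specialize the SLWE MacWilliams identity of Theorem~\ref{slwe} at $X_0=X$, $X_1=\cdots=X_{16}=Y$ and use $Ham_C(X,Y)=SLWE_C(X,Y,\ldots,Y)$, the only substantive step being that $B_0$ collapses to $X+(|R|-1)Y$ while $B_k$ collapses to $X-Y$ for $k\ge 1$. Your justification of the latter is the right one, and is in fact more careful than the paper: nontriviality of $\chi$ alone does not give $\sum_{s\in R}\chi(rs)=0$ for all $r\neq 0$; one needs $\chi$ to be a generating character (kernel containing no nonzero ideal), which does hold here since in CRT coordinates $\chi\bigl(\sum_j r_j\eta_j\bigr)=i^{\,-r_1+r_2+r_3+r_4-r_5-r_6-r_7+r_8}$, a product of generating characters of $\ZZ_4$. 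You are also right about the constant: any correct argument yields $Ham_{C^\perp}(X,Y)=\frac{1}{|C|}Ham_C(X+65535Y,\,X-Y)$ with $65535=|R|-1$ (test $C=R^n$, whose dual is $\{\mathbf{0}\}$, to see that $65536$ cannot be right), so your derivation proves the corrected statement and the printed $65536$ is an off-by-one error; likewise the coefficients $1280$ appearing in $B_0$, $B_{16}$, $E_0$, $E_{16}$ are typos for $\binom{16}{4}=1820$, as you observed, and with $1820$ the row sums are $2^{16}$ as required.
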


\begin{proof}
Similar to the proof of Theorem 10 in \cite{Li17}.
\end{proof}

Next, we consider the other weight enumerator with respect to the Lee weight, called Lee weight enumerator.
For a linear code $C,$ define $A_i$ as a number of elements of $C$ having Lee weight $i$.
The sequence $A_0,A_1,\ldots, A_{16n}$ is called \emph{weight distribution} in $C$ with respect to the Lee weight.
The \emph{Lee weight enumerator} for $C$ is defined by
\[
Lee_C(X,Y)=\sum\limits_{\textbf{c}\in C}X^{16n-w_L(\textbf{c})}Y^{w_L(\textbf{c})}=\sum\limits_{i=0}^{16n}A_iX^{16n-i}Y^{i}
\]
Then we have the following property.
\begin{theorem}\label{lee}
Let $C$ be a linear code of length $n$ over $R$. Then
\[
\begin{aligned}
		Lee_C(X,Y)&=SLWE_C(X^{16},X^{15}Y,X^{14}Y^{2},X^{13}Y^{3},X^{12}Y^{4},X^{11}Y^{5},X^{10}Y^{6},X^{9}Y^{7},X^{8}Y^{8},\\
		&~~~~~X^{7}Y^{9},X^{6}Y^{10},X^{5}Y^{11},X^{4}Y^{12},X^{3}Y^{13},X^{2}Y^{14},XY^{15},Y^{16}).
\end{aligned}
\]
\end{theorem}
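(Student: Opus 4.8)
The plan is to prove this identity by a direct substitution argument at the level of individual codewords, exactly parallel to the analogous statements for codes over $\ZZ_4$ and over $\ZZ_4+u\ZZ_4+v\ZZ_4+uv\ZZ_4$. The point is that both $Lee_C(X,Y)$ and the right-hand side are sums indexed by the codewords $\cv \in C$, so it suffices to check that each $\cv$ contributes the same monomial to both expressions.

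First I would fix a codeword $\cv = (c_0,c_1,\ldots,c_{n-1}) \in C$ and record its Lee-composition data: for $0 \leq i \leq 16$, let $n_i(\cv)$ denote the number of coordinates $c_j$ with $w_L(c_j) = i$. Because every coordinate of $\cv$ has a well-defined Lee weight lying in $\{0,1,\ldots,16\}$, we have
\[
\sum_{i=0}^{16} n_i(\cv) = n,
\qquad
w_L(\cv) = \sum_{j=0}^{n-1} w_L(c_j) = \sum_{i=0}^{16} i\, n_i(\cv).
\]
These two bookkeeping identities are the whole engine of the proof.

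Next I would carry out the substitution $X_i \mapsto X^{16-i}Y^{i}$ (for $0 \leq i \leq 16$) in the monomial of $SLWE_C$ attached to $\cv$, namely $\prod_{i=0}^{16} X_i^{n_i(\cv)}$. This yields
\[
\prod_{i=0}^{16}\bigl(X^{16-i}Y^{i}\bigr)^{n_i(\cv)}
= X^{\sum_{i=0}^{16}(16-i)\,n_i(\cv)}\; Y^{\sum_{i=0}^{16} i\, n_i(\cv)}.
\]
By the first identity above the exponent of $X$ equals $16\sum_{i} n_i(\cv) - \sum_i i\,n_i(\cv) = 16n - w_L(\cv)$, and by the second the exponent of $Y$ equals $w_L(\cv)$. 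Hence the image monomial is $X^{16n - w_L(\cv)}Y^{w_L(\cv)}$, which is precisely the term $\cv$ contributes to $Lee_C(X,Y)$. Summing over all $\cv \in C$ gives the claimed identity.

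I do not expect a genuine obstacle here; the only thing requiring care is to confirm that the substitution pattern in the statement is indeed the specialization $X_i \mapsto X^{16-i}Y^i$ of the $SLWE_C$ variables, and that the grouping of the $65536$ complete-weight variables into the blocks of sizes $1,16,120,560,1820,4368,8008,11440,12870,11440,\ldots,16,1$ records exactly "Lee weight $= i$", which was already established when $SLWE_C$ was defined. Given that, the computation above is immediate.
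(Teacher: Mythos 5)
Your proposal is correct and follows essentially the same route as the paper's own proof: a per-codeword substitution $X_i \mapsto X^{16-i}Y^{i}$ combined with the bookkeeping identities $\sum_{i=0}^{16} n_i(\cv)=n$ and $w_L(\cv)=\sum_{i=0}^{16} i\,n_i(\cv)$, then summing over $\cv \in C$. In fact your write-up is slightly more careful than the paper's, which omits the exponents $n_i(\cv)$ in its intermediate product $\prod_{i=0}^{16}X^{16-i}Y^{i}$; your version states the monomial correctly.
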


\begin{proof}
Denote $w_L(\cv)=\sum\limits_{i=0}^{16}in_i(\cv)$.  Then we have
\[
16n-w_L(\cv)=\sum\limits_{i=0}^{16}16 n_i(\cv)-\sum\limits_{i=0}^{16}i n_i(\cv)
=\sum\limits_{i=0}^{16}(16-i) n_i(\cv).
\]
By definition, we obtain
\[
\begin{aligned}
		Lee_C (X,Y)&=\sum\limits_{\cv \in C}X^{16n-w_L(\cv)}Y^{w_L(\cv)}\\
		&=\sum\limits_{\cv\in C}X^{\sum\limits_{i=0}^{16}(16-i)n_i}Y^{\sum\limits_{i=0}^{16}i n_i}\\
		&=\sum\limits_{\cv \in C}\prod\limits_{i=0}^{16}X^{16-i}Y^i\\
		&=SLWE_C(X^{16},X^{15}Y,\ldots,Y^{16}).
\end{aligned}
\]
\end{proof}
The following theorem give us a MacWilliams relation with respect to the Lee weight enumerator.

\begin{theorem}
Let $C$ be a linear code of length  $n$ over $R$. Then
\[
Lee_{C^\perp}(X,Y)=\dfrac{1}{|C|}Lee_{C}(X+Y,X-Y).
\]
\end{theorem}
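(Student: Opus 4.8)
The plan is to reduce the statement to the already-established MacWilliams relation for the Lee weight enumerator over $\ZZ_4$ by means of the Gray map $\Phi: R^n \to \ZZ_4^{8n}$, exactly in the spirit of Theorem~\ref{lee}. First I would recall that $\Phi$ is a weight-preserving bijection in the sense that $w_L(\cv) = w_L(\Phi(\cv))$ for every $\cv \in R^n$; this is immediate from the definition of the Lee weight on $R$ as $w_L(r) = \sum_{i=1}^8 w_L(r_i)$ together with the definition of $\Phi$. Consequently $\Phi(C)$ is a linear code of length $8n$ over $\ZZ_4$ with $\mathrm{Lee}_C(X,Y) = \mathrm{Lee}_{\Phi(C)}(X,Y)$, since the two codes have the same Lee weight distribution.

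The second step is to identify $\Phi(C^\perp)$ with $\Phi(C)^\perp$ (the $\ZZ_4$-dual of length $8n$). Using the decomposition $C = C_1\eta_1 \oplus \cdots \oplus C_8\eta_8$ from Theorem~\ref{T-decom}, together with $C^\perp = C_1^\perp\eta_1 \oplus \cdots \oplus C_8^\perp\eta_8$, and the fact that $\phi$ is a ring isomorphism sending the inner product $\xv\cdot\yv$ on $R^n$ to the componentwise inner products $(\xv_j \cdot \yv_j)$ on $\ZZ_4^n$ (displayed explicitly in the excerpt before Theorem~\ref{T-decom}), one gets $\Phi(C) = C_1 \times C_2 \times \cdots \times C_8$ as a subset of $\ZZ_4^{8n}$ and $\Phi(C^\perp) = C_1^\perp \times \cdots \times C_8^\perp$. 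Since the Euclidean dual of a direct product is the product of the Euclidean duals, $\Phi(C^\perp) = \Phi(C)^\perp$. Also $|C| = \prod_i |C_i| = |\Phi(C)|$.

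The third step is to invoke the classical MacWilliams identity for the Lee weight enumerator over $\ZZ_4$: for any linear code $D$ of length $N$ over $\ZZ_4$ one has $\mathrm{Lee}_{D^\perp}(X,Y) = \frac{1}{|D|}\,\mathrm{Lee}_D(X+Y, X-Y)$. Applying this with $D = \Phi(C)$ and $N = 8n$, and then translating back via the weight-preserving identities from the first two steps, yields
\[
\mathrm{Lee}_{C^\perp}(X,Y) = \mathrm{Lee}_{\Phi(C)^\perp}(X,Y) = \frac{1}{|\Phi(C)|}\,\mathrm{Lee}_{\Phi(C)}(X+Y,X-Y) = \frac{1}{|C|}\,\mathrm{Lee}_{C}(X+Y,X-Y),
\]
which is the claim.

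I do not expect a genuine obstacle here; the only point needing care is the bookkeeping in the second step, namely verifying that the Gray map really does carry the $R$-dual to the $\ZZ_4$-dual of the product code rather than to some permuted or twisted version of it. This hinges on the orthogonality relations $\eta_i\eta_j = 0$ for $i\neq j$ and $\eta_i^2 = \eta_i$, which force $\langle \xv,\yv\rangle = 0$ in $R^n$ if and only if $\xv_j\cdot\yv_j = 0$ in $\ZZ_4^n$ for each $j$; this is exactly the content of the displayed inner-product formula preceding Theorem~\ref{T-decom}, so it can be cited rather than redone. An alternative, equally short route avoiding the Gray map entirely is to substitute the monomial specializations of Theorem~\ref{lee} into the SLWE MacWilliams relation of Theorem~\ref{slwe}: setting $X_k = X^{16-k}Y^k$ in the $B_k$'s and checking that the resulting linear combinations collapse to powers of $X+Y$ and $X-Y$ via the binomial identity $\sum_k \binom{16}{k}(\pm)^k X^{16-k}Y^k = (X\pm Y)^{16}$. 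I would present the Gray-map argument as the main proof since it is cleaner, and perhaps remark that the substitution route also works.
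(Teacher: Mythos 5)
Your proof is correct, but it follows a genuinely different route from the paper. The paper proves the identity entirely inside its own weight-enumerator machinery: it substitutes the monomials $X_k \mapsto X^{16-k}Y^k$ of Theorem~\ref{lee} into the SLWE MacWilliams relation of Theorem~\ref{slwe} and verifies by direct computation that each $B_k$ collapses to $E_k=(X+Y)^{16-k}(X-Y)^k$, which is exactly the ``alternative route'' you mention in passing (note, though, that this collapse is not just the binomial theorem: for $0<k<16$ the specialized $B_k$ must be recognized as the mixed product $(X+Y)^{16-k}(X-Y)^k$, a Krawtchouk-type verification, which is precisely the bulk of the paper's proof). Your main argument instead transports the problem through the Gray map: $\Phi$ is Lee-weight preserving by construction, $\Phi(C)=C_1\times\cdots\times C_8$ and $\Phi(C^\perp)=C_1^\perp\times\cdots\times C_8^\perp=\Phi(C)^\perp$ by the decomposition of Theorem~\ref{T-decom} together with the fact that the dual of a product code is the product of the duals, and then the classical MacWilliams identity for the Lee enumerator over $\ZZ_4$ (available in the cited literature, e.g.\ Hammons et al.\ or Wan, though not restated in this paper) finishes the job. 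What your approach buys is brevity and conceptual clarity---it avoids checking seventeen polynomial identities---at the cost of importing the $\ZZ_4$ Lee MacWilliams identity as an external ingredient and leaning on Theorem~\ref{T-decom}(2); what the paper's approach buys is self-containment within the character-theoretic framework of Section~\ref{Macwill}, at the cost of the lengthy explicit computation. Your handling of the one delicate point, that orthogonality in $R^n$ corresponds to blockwise orthogonality over $\ZZ_4$ via the idempotent decomposition, is the right thing to check and is checked correctly.
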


\begin{proof}
By Theorem \ref{slwe} dan Theorem \ref{lee}, we obtain
\[
\begin{aligned}
		Lee_{C^\perp}(X,Y)=SLWE_{C^\perp}(X^{16},X^{15}Y,\ldots,Y^{16})=\dfrac{1}{|C|}SLWE_C(E_0,E_1,E_2,\ldots,E_{16})
\end{aligned}
\]
with
\[
\begin{aligned}
		E_0 =~& X^{16}+16X^{15}Y+120X^{14}Y^{2}+560X^{13}Y^{3}+1280X^{12}Y^{4}+4368X^{11}Y^{5}+8008X^{10}Y^{6}\\
		&+11440X^{9}Y^{7}+12870X^{8}Y^{8}+11440X^{7}Y^{9}+8008X^{6}Y^{10}+4368X^{5}Y^{11}+1280X^{4}Y^{12}\\
		&+560X^{3}Y^{13}+120X^{2}Y^{14}+16XY^{15}+Y^{16}=(X+Y)^{16},\\
\end{aligned}
\]
\[
\begin{aligned}
		E_1 = ~&X^{16}+14X^{15}Y+90X^{14}Y^{2}+350X^{13}Y^{3}+910X^{12}Y^{4}+1638X^{11}Y^{5}+2002X^{10}Y^{6}\\
		&+1430X^{9}Y^{7}-1430X^{7}Y^{9}-2002X^{6}Y^{10}-1638X^{5}Y^{11}-910X^{4}Y^{12}-350X^{3}Y^{13}\\
		&-90X^{2}Y^{14}-14XY^{15}-Y^{16}=(X+Y)^{15}(X-Y),\\
\end{aligned}
\]
\[
\begin{aligned}
		E_2 = ~&X^{16}+12X^{15}Y+64X^{14}Y^{2}+196X^{13}Y^{3}+364X^{12}Y^{4}+364X^{11}Y^{5}-572X^{9}Y^{7}\\
		&-858X^{8}Y^{8}-572X^{7}Y^{9}+364X^{5}Y^{11}+364X^{4}Y^{12}+196X^{3}Y^{13}+64X^{2}Y^{14}\\
		&+12XY^{15}+Y^{16}=(X+Y)^{14}(X-Y)^{2},\\
\end{aligned}
\]
\[
\begin{aligned}
		E_3 = ~&X^{16}+10X^{15}Y+42X^{14}Y^{2}+90X^{13}Y^{3}+78X^{12}Y^{4}-78X^{11}Y^{5}-286X^{10}Y^{6}-286X^{9}Y^{7}\\
		&+286X^{7}Y^{9}+286X^{6}Y^{10}+78X^{5}Y^{11}-78X^{4}Y^{12}-90X^{3}Y^{13}-42X^{2}Y^{14}-10XY^{15}\\
		&-Y^{16}=(X+Y)^{13}(X-Y)^{3},\\
\end{aligned}
\]
\[
\begin{aligned}
		E_4 = ~&X^{16}+8X^{15}Y+24X^{14}Y^{2}+24X^{13}Y^{3}-36X^{12}Y^{4}-120X^{11}Y^{5}-88X^{10}Y^{6}+88X^{9}Y^{7}\\
		&+198X^{8}Y^{8}+88X^{7}Y^{9}-88X^{6}Y^{10}-120X^{5}Y^{11}-36X^{4}Y^{12}+24X^{3}Y^{13}+24X^{2}Y^{14}\\
		&+8XY^{15}+Y^{16}=(X+Y)^{12}(X-Y)^{4},\\
\end{aligned}
\]
\[
\begin{aligned}
		E_5 = ~&X^{16}+6X^{15}Y+10X^{14}Y^{2}-10X^{13}Y^{3}-50X^{12}Y^{4}-34X^{11}Y^{5}+66X^{10}Y^{6}+110X^{9}Y^{7}\\
		&-110X^{7}Y^{9}-66X^{6}Y^{10}+34X^{5}Y^{11}+50X^{4}Y^{12}+10X^{3}Y^{13}-10X^{2}Y^{14}-6XY^{15}\\
		&-Y^{16}=(X+Y)^{11}(X-Y)^{5},\\
\end{aligned}
\]
\[
\begin{aligned}
		E_6 = ~&X^{16}+4X^{15}Y-20X^{13}Y^{3}-20X^{12}Y^{4}+36X^{11}Y^{5}+64X^{10}Y^{6}-20X^{9}Y^{7}-90X^{8}Y^{8}\\
		&-20X^{7}Y^{9}+64X^{6}Y^{10}+36X^{5}Y^{11}-20X^{4}Y^{12}-20X^{3}Y^{13}+4XY^{15}\\
		&+Y^{16}=(X+Y)^{10}(X-Y)^6,\\
\end{aligned}
\]
\[
\begin{aligned}
		E_7 = ~&X^{16}+2X^{15}Y-6X^{14}Y^{2}-14X^{13}Y^{3}+14X^{12}Y^{4}+42X^{11}Y^{5}-14X^{10}Y^{6}-70X^{9}Y^{7}\\
		&+70X^{7}Y^{9}+14X^{6}Y^{10}-42X^{5}Y^{11}-14X^{4}Y^{12}+14X^{3}Y^{13}+6X^{2}Y^{14}\\
		&-2XY^{15}-Y^{16}=(X+Y)^9(X-Y)^7,\\
\end{aligned}
\]
\[
\begin{aligned}
		E_8 = ~&X^{16}-8X^{14}Y^{2}+28X^{12}Y^{4}-56X^{10}Y^{6}+70X^{8}Y^{8}-56X^{6}Y^{10}+28X^{4}Y^{12}-8X^{2}Y^{14}\\
		&+Y^{16}=(X+Y)^8(X-Y)^8,\\
\end{aligned}
\]
\[
\begin{aligned}
		E_9 = ~&X^{16}-2X^{15}Y-6X^{14}Y^{2}+14X^{13}Y^{3}+14X^{12}Y^{4}-42X^{11}Y^{5}-14X^{10}Y^{6}+70X^{9}Y^{7}\\
		&-70X^{7}Y^{9}+14X^{6}Y^{10}+42X^{5}Y^{11}-14X^{4}Y^{12}-14X^{3}Y^{13}+6X^{2}Y^{14}+2XY^{15}\\
		&-Y^{16}=(X+Y)^7(X-Y)^9,\\
\end{aligned}
\]
\[
\begin{aligned}
		E_{10} = ~&X^{16}-4X^{15}Y+20X^{13}Y^{3}-20X^{12}Y^{4}-36X^{11}Y^{5}+64X^{10}Y^{6}+20X^{9}Y^{7}\\
		&-90X^{8}Y^{8}+20X^{7}Y^{9}+64X^{6}Y^{10}-36X^{5}Y^{11}-20X^{4}Y^{12}+20X^{3}Y^{13}-4XY^{15}\\
		&+Y^{16}=(X+Y)^6(X-Y)^{10},\\
\end{aligned}
\]
\[
\begin{aligned}
		E_{11} = ~&X^{16}-6X^{15}Y+10X^{14}Y^{2}+10X^{13}Y^{3}-50X^{12}Y^{4}+34X^{11}Y^{5}+66X^{10}Y^{6}-110X^{9}Y^{7}\\
		&+110X^{7}Y^{9}-66X^{6}Y^{10}-34X^{5}Y^{11}+50X^{4}Y^{12}-10X^{3}Y^{13}-10X^{2}Y^{14}+6XY^{15}\\
		&-Y^{16}=(X+Y)^5(X-Y)^{11},\\
\end{aligned}
\]
\[
\begin{aligned}
		E_{12} = ~&X^{16}-8X^{15}Y+24X^{14}Y^{2}-24X^{13}Y^{3}-36X^{12}Y^{4}+120X^{11}Y^{5}-88X^{10}Y^{6}-88X^{9}Y^{7}\\
		&+198X^{8}Y^{8}-88X^{7}Y^{9}-88X^{6}Y^{10}+120X^{5}Y^{11}-36X^{4}Y^{12}-24X^{3}Y^{13}+24X^{2}Y^{14}\\
		&-8XY^{15}+Y^{16}=(X+Y)^4(X-Y)^{12},\\
\end{aligned}
\]
\[
\begin{aligned}
		E_{13} = ~&X^{16}-10X^{15}Y+42X^{14}Y^{2}-90X^{13}Y^{3}+78X^{12}Y^{4}+78X^{11}Y^{5}-286X^{10}Y^{6}\\
		&+286X^{9}Y^{7}-286X^{7}Y^{9}+286X^{6}Y^{10}-78X^{5}Y^{11}-78X^{4}Y^{12}+90X^{3}Y^{13}\\
		&-42X^{2}Y^{14}+10XY^{15}-Y^{16}=(X+Y)^3(X-Y)^{13},\\
\end{aligned}
\]
\[
\begin{aligned}
		E_{14} = ~&X^{16}-12X^{15}Y+64X^{14}Y^{2}-196X^{13}Y^{3}+364X^{12}Y^{4}-364X^{11}Y^{5}+572X^{9}Y^{7}\\
		&-858X^{8}Y^{8}+572X^{7}Y^{9}-364X^{5}Y^{11}+364X^{4}Y^{12}-196X^{3}Y^{13}+64X^{2}Y^{14}\\
		&-12XY^{15}+Y^{16}=(X+Y)^2(X-Y)^{14},\\
\end{aligned}
\]
\[
\begin{aligned}
		E_{15} = ~&X^{16}-14X^{15}Y+90X^{14}Y^{2}-350X^{13}Y^{3}+910X^{12}Y^{4}-1638X^{11}Y^{5}+2002X^{10}Y^{6}\\
		&-1430X^{9}Y^{7}+1430X^{7}Y^{9}-2002X^{6}Y^{10}+1638X^{5}Y^{11}-910X^{4}Y^{12}+350X^{3}Y^{13}\\
		&-90X^{2}Y^{14}+14XY^{15}-Y^{16}=(X+Y)(X-Y)^{15},\\
\end{aligned}
\]
\[
\begin{aligned}
		E_{16} = ~&X^{16}-16X^{15}Y+120X^{14}Y^{2}-560X^{13}Y^{3}+1280X^{12}Y^{4}-4368X^{11}Y^{5}+8008X^{10}Y^{6}\\
		&-11440X^{9}Y^{7}+12870X^{8}Y^{8}-11440X^{7}Y^{9}+8008X^{6}Y^{10}-4368X^{5}Y^{11}\\
		&+1280X^{4}Y^{12}-560X^{3}Y^{13}+120X^{2}Y^{14}-16XY^{15}+Y^{16}=(X-Y)^{16}.
\end{aligned}
\]
Hence, we have
\[
Lee_{C^\perp}(X,Y)=\dfrac{1}{|C|}Lee(X+Y,X-Y).
\]
\end{proof}

\section{Cyclic and quasi-cyclic Codes}
Now, let us look at an important class of linear codes, namely cyclic codes. In this section we mainly consider
the structural properties of cyclic codes over the ring $R.$

The notion of cyclic codes is standard for codes over all rings. A cyclic shift on $R^n$ is a permutation $T$
such that
\[
T(c_0,c_1,c_2,\ldots, c_{n-1})=(c_{n-1},c_0,c_1,\ldots,c_{n-2}).
\]
A linear code $C$ over $R$ is called \emph{cyclic code} if $C$ is invariant under the cyclic shift $T,$ namely
$T(C)=C.$  We use the usual ideas of identifying vectors in $R^n$ and polynomials in
the residue class ring $R[x]/ \langle x^n-1 \rangle$ as follows:
\[
\cv=(c_0,c_1,c_2,\ldots,c_{n-1})  \longleftrightarrow
c(x)=c_0+c_1x+c_2x^2+\cdots+c_{n-1}x^{n-1} + \langle x^n-1 \rangle.
\]
With this identification we see that $T(\cv)$ is identified by $x \cdot c(x) \in R[x]/ \langle x^n-1 \rangle.$
It implies that cyclic codes over $R$ are identified by ideals in the residue class ring $R[x]/ \langle x^n-1 \rangle.$
Hence, in order to understand cyclic codes over the ring $R$ it is essential for us to understand the structure
of the residue class ring $R[x]/ \langle x^n-1 \rangle.$

The first theorem below is a straightforward generalization of Theorem 13 proven by Li, Guo, Zhu, and Kai \cite{Li17}.

\begin{theorem}
Let $\DS C=C_1\eta_1 \oplus C_2 \eta_2 \oplus \cdots \oplus C_8 \eta_8.$
Then $C$ is a cyclic code over $R$
if and only if one of following three conditions is satisfied:
\begin{enumerate}[(1)]
\item For $t\in \{1,2,\ldots,8\},$ $C_t$ is a cyclic code over $\mathbb{Z}_4.$
\item For $t\in \{1,2,\ldots,8\},$ $C_t^{\perp}$ is a cyclic code over $\mathbb{Z}_4.$
\item $C^{\perp}$ is a cyclic code over $R$.
\end{enumerate}
\end{theorem}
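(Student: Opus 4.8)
The plan is to push the whole question down to the component codes $C_t$ over $\ZZ_4$ via the CRT decomposition, exactly as in Theorem~13 of \cite{Li17}, and then to invoke the classical fact that a $\ZZ_4$-code is cyclic precisely when its dual is. The one thing to check with a little care is that the cyclic shift on $R^n$ is compatible with the idempotent decomposition; everything else is bookkeeping.

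First I would observe that if $\rv = \rv_1\eta_1 + \rv_2\eta_2 + \cdots + \rv_8\eta_8 \in R^n$ with $\rv_j \in \ZZ_4^n$, then, because $T$ merely permutes coordinates while the $\eta_j$ act as fixed scalars on every coordinate,
\[
T(\rv) = T(\rv_1)\eta_1 + T(\rv_2)\eta_2 + \cdots + T(\rv_8)\eta_8 .
\]
Hence $T(C) = T(C_1)\eta_1 \oplus \cdots \oplus T(C_8)\eta_8$, and since the decomposition of any subset of $R^n$ into its $\eta_j$-components is unique (as noted before Theorem~\ref{T-decom}), $T(C)=C$ holds if and only if $T(C_t)=C_t$ for every $t\in\{1,\dots,8\}$. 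This is precisely the equivalence of ``$C$ is cyclic over $R$'' with condition~(1).

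Next, by Theorem~\ref{T-decom} we have $C^\perp = C_1^\perp\eta_1 \oplus \cdots \oplus C_8^\perp\eta_8$, so applying the previous paragraph to $C^\perp$ in place of $C$ gives: $C^\perp$ is cyclic over $R$ if and only if each $C_t^\perp$ is cyclic over $\ZZ_4$, which is the equivalence of~(2) and~(3). Finally, over $\ZZ_4$ (a finite Frobenius ring) the dual of a cyclic code is again cyclic, and $(C_t^\perp)^\perp = C_t$, so $C_t$ is cyclic if and only if $C_t^\perp$ is cyclic; this links~(1) and~(2). Chaining these equivalences shows that ``$C$ cyclic'', (1), (2) and (3) are all equivalent, which proves the theorem.

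I do not anticipate a genuine obstacle. The only step that deserves a moment's thought is the displayed identity $T(\rv) = \sum_j T(\rv_j)\eta_j$ — that scaling coordinates by the $\eta_j$ commutes with the coordinate shift — combined with careful use of the uniqueness of the CRT decomposition, so that equality of codes can be tested component by component; the $\ZZ_4$-level input in the last step is classical.
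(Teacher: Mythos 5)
Your proposal is correct and follows essentially the same route as the paper: decompose the cyclic shift componentwise through the idempotents to get the equivalence with condition (1), use the dual decomposition $C^\perp = C_1^\perp\eta_1 \oplus \cdots \oplus C_8^\perp\eta_8$ for (3), and invoke the classical $\ZZ_4$ fact (Wan, Proposition 7.9) that duals of cyclic codes are cyclic to link (1) and (2). Your version is somewhat more explicit about uniqueness of the decomposition and about closing the chain of equivalences via $(C_t^\perp)^\perp = C_t$, but it is the same argument.
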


\begin{proof}
Let $\DS \cv=\sum_{t=1}^8 \eta_t\cv_t\in C$, and write $\cv_t=(c_{t,0},c_{t,1},\ldots,c_{t,n-1})\in C_t,$
for $1 \leq t \leq 8.$
Since $C$ is a cyclic code, we also have
\[
\left(\sum_{t=1}^8 \eta_t c_{t,n-1},\sum_{t=1}^8 \eta_t c_{t,0},\ldots,\sum_{t=1}^8 \eta_t c_{t,n-2}\right)\in C.
\]
So, $(c_{t,n-1},c_{t,0},\dots,c_{t,n-2})\in C_t,$ for $1 \leq t \leq 8,$ and hence, $C_t$ is cyclic for $1 \leq t \leq 8.$
The reverse also holds, so the first condition is proven.

If $C_t$ is cyclic over $\ZZ_4,$ then $C_t^{\perp}$ is also cyclic (\cite{Wan97}, Proposition 7.9).
From $(1),$ we have $C^{\perp}$ is a cyclic code over $R,$ so we have $C$ is a cyclic code over $R.$
\end{proof}

We start to observe the generator polynomials of cyclic code and its dual over $R.$  For that purpose, we need
the following theorem proven by Li, Guo, Zhu, and Kai \cite{Li17}.

\begin{theorem}[\cite{Li17}, Theorem 15]\label{T-Li}
Let $C=\langle f(x)+2p(x),~2g(x) \rangle$ be a cyclic code over $\ZZ_4.$  Then
\[
C^{\perp}=\langle \widehat{g}(x)^*+2x^{\deg(\widehat{g}(x))-\deg(u(x))}u(x)^*,~2\widehat{f}(x)^*\rangle
\]
with $\DS\widehat{f}(x):=\left(\frac{x^n-1}{f(x)} \right),$ $\DS\widehat{g}(x):=\left(\frac{x^n-1}{g(x)} \right),$
 and $\DS f(x)^*:=x^{\deg(f(x))}f\left(\frac{1}{x}\right)$.
\end{theorem}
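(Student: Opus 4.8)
The plan is to translate everything into the language of ideals in $\mathcal{R}_n := \ZZ_4[x]/\langle x^n-1\rangle$ and to use the standard principle that, for cyclic codes, the Euclidean dual is the reciprocal of the annihilator. Concretely, writing $\mathrm{Ann}(C)=\{r(x)\in\mathcal{R}_n : r(x)c(x)=0 \text{ in } \mathcal{R}_n \text{ for all } c(x)\in C\}$ and $a(x)^*=x^{\deg a(x)}a(1/x)$, one has $C^\perp=\mathrm{Ann}(C)^*$ (the map $x\mapsto x^{-1}$ is well-defined on $\mathcal{R}_n$ because $x\cdot x^{n-1}=1$, and comparing a coefficient of $r(x)c(x)$ with the inner product of $r^*$ against a cyclic shift of $c$ gives the identification; equality, not just inclusion, follows since $\ZZ_4$ is Frobenius so $|C|\cdot|C^\perp|=4^n=|C|\cdot|\mathrm{Ann}(C)|$). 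Thus the theorem reduces to two tasks: (i) compute $\mathrm{Ann}(C)$ for $C=\langle f(x)+2p(x),\,2g(x)\rangle$, and (ii) rewrite the reciprocal ideal $\mathrm{Ann}(C)^*$ in the ``triangular'' two-generator normal form displayed in the statement.

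For step (i) I would first invoke the structure theorem for cyclic codes over $\ZZ_4$ (Pless--Qian type classification; see also \cite{Wan97}), which in particular fixes the divisibility relations $g(x)\mid f(x)\mid x^n-1$ over $\ZZ_4$ together with the compatibility condition on $p(x)$ forced by well-definedness of the presentation $\langle f+2p,\,2g\rangle$. When $n$ is odd this is transparent: $x^n-1$ factors over $\ZZ_4$ into pairwise coprime basic irreducibles (Hensel lifts of the factorization over $\FF_2$), so by the Chinese Remainder Theorem $\mathcal{R}_n\cong\prod_i GR(4,d_i)$, each factor a chain ring with only the ideals $0\subsetneq(2)\subsetneq GR(4,d_i)$. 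One reads off, componentwise, which of these three ideals $C$ induces (whole ring where $f$ is a unit, i.e.\ the factor does not divide $f$; $(2)$ where the factor divides $f$ but not $g$, the polynomial $p$ adjusting the lift; $0$ where the factor divides $g$), and the annihilator is obtained componentwise by the complementary assignment ($0\leftrightarrow$ whole ring, $(2)\leftrightarrow(2)$). Reassembling via CRT yields $\mathrm{Ann}(C)=\langle \widehat g(x)+2q(x),\,2\widehat f(x)\rangle$ with $\widehat f=(x^n-1)/f$, $\widehat g=(x^n-1)/g$, and $q(x)$ determined by $p(x)$; the cardinality identity $|C|=4^{\,n-\deg f}\,2^{\,\deg f-\deg g}$ serves as a running consistency check. (A purely polynomial argument, manipulating the divisibility relations directly, works equally well for readers who wish to avoid CRT.)

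For step (ii), applying $*$ to these generators sends $\widehat f,\widehat g,q$ to $\widehat f^{\,*},\widehat g^{\,*},q^{\,*}$, but the images need not have minimal degree — and this is exactly why the shift $x^{\deg\widehat g(x)-\deg u(x)}$ and the auxiliary polynomial $u(x)$ appear: $u(x)$ is the divisor of $g(x)$ (equivalently, the part of $p(x)$) recording the genuine ``mixed'' generator, and the power of $x$ realigns $u(x)^*$ so that $\widehat g(x)^*+2x^{\deg\widehat g-\deg u}u(x)^*$ is the reduced generator of the correct degree. I expect the main obstacle to be precisely this degree bookkeeping in the $2$-part: one must track how $p(x)$ transforms under $*$, pin down $u(x)$ via the normal-form theorem, and check that the shifted reciprocal actually lies in and generates $\mathrm{Ann}(C)^*$. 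Once the two generators are identified, confirming they belong to $C^\perp$ is a direct computation (the products with $f(x)+2p(x)$ and $2g(x)$ vanish in $\mathcal{R}_n$), and the cardinality/minimality count then forces $C^\perp$ to equal the ideal they generate, completing the proof.
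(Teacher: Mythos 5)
You should note first that the paper itself offers no proof of this statement: it is imported verbatim as Theorem~15 of \cite{Li17} and used as a black box, so there is no in-paper argument to measure your proposal against. On its own terms, your overall strategy --- identify $C^{\perp}$ with the reciprocal of the annihilator ideal in $\ZZ_4[x]/\langle x^n-1\rangle$ (legitimate since $\ZZ_4$ is Frobenius, so a cardinality count upgrades the inclusion to equality), compute $\mathrm{Ann}(C)$, and then recast its reciprocal in the two-generator normal form --- is the standard route in the literature (Pless--Qian, \cite{Wan97}) and is a sensible plan.

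However, there is a genuine gap, and it sits exactly where the content of the theorem lies. You never actually identify $u(x)$: it is undefined in the quoted statement and remains undefined in your proposal (``the divisor of $g(x)$ \ldots recording the genuine mixed generator'' is a description, not a construction), and the entire point of the formula is the explicit shape of the mixed generator $\widehat{g}(x)^*+2x^{\deg(\widehat{g}(x))-\deg(u(x))}u(x)^*$ together with the verification that it lies in and, with $2\widehat{f}(x)^*$, generates the dual. Deferring ``the degree bookkeeping in the $2$-part'' defers the proof itself. Moreover, your computation of $\mathrm{Ann}(C)$ leans on the CRT decomposition of $\ZZ_4[x]/\langle x^n-1\rangle$ into Galois (chain) rings, which is available only for $n$ odd; the three-polynomial presentation $\langle f(x)+2p(x),\,2g(x)\rangle$ with a nontrivial $p$, and hence the correction term involving $u(x)$, is precisely what matters beyond that easy case, and the parenthetical claim that ``a purely polynomial argument works equally well'' is where all the unperformed work is hiding. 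Finally, even in the odd case your componentwise assignment (unit component, $(2)$, or $0$ according to divisibility by $f$ and $g$) silently ignores $p$: on a factor dividing $g$ the component of $C$ is generated by $2p$, so it is $0$ only under the normal-form compatibility conditions on $p$, which are imposed by the classification theorem, not by ``well-definedness of the presentation'' as you assert. As it stands the proposal is a credible outline, but not a proof of the stated identity.
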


The following two theorems provide generator polynomials of cyclic code and its dual over $R.$

\begin{theorem}\label{T-CC}
Let $\DS C=C_1 \eta_1 \oplus C_2 \eta_2 \oplus \cdots \oplus C_8 \eta_8$
be a cyclic code of length $n$ over $R$. If for every $t\in\{1,2,\dots,8\}$,
there exist polynomials $f_t(x),g_t(x),p_t(x)\in\mathbb{Z}_4[x]$ such that $C_t=\langle f_t(x)+2p_t(x),~2g_t(x) \rangle,$ then
\[
\DS C=\left \langle \sum_{t=1}^8\eta_tf_t(x)+2\sum_{t=1}^8\eta_tp_t(x),~2\sum_{t=1}^8\eta_tg_t(x) \right\rangle.
\]
Furthermore, if $n$ is odd, then $\DS C=\left \langle \sum_{t=1}^8\eta_tf_t(x)+2\sum_{t=1}^8\eta_tg_t(x)\right \rangle.$
\end{theorem}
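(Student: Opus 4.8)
The plan is to exploit the $\eta$-decomposition $R=\bigoplus_{t=1}^{8} R\eta_t$ together with the orthogonality relations $\eta_i\eta_j=\delta_{ij}\eta_i$ and $\sum_t\eta_t=1$, so that all computations in $R[x]/\langle x^n-1\rangle$ split into eight independent computations in $\ZZ_4[x]/\langle x^n-1\rangle$. The key observation is that for polynomials $a_t(x)\in\ZZ_4[x]$ the product in $R[x]$ behaves componentwise: $\left(\sum_t\eta_t a_t(x)\right)\left(\sum_t\eta_t b_t(x)\right)=\sum_t\eta_t a_t(x)b_t(x)$. Hence an ideal of $R[x]/\langle x^n-1\rangle$ generated by elements of the form $\sum_t\eta_t a_t(x)$ decomposes as $\bigoplus_t\eta_t I_t$, where $I_t$ is the ideal of $\ZZ_4[x]/\langle x^n-1\rangle$ generated by the corresponding $t$-components.

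First I would prove the forward inclusion. Set $F(x)=\sum_{t}\eta_t f_t(x)$, $P(x)=\sum_t\eta_t p_t(x)$, $G(x)=\sum_t\eta_t g_t(x)$, and let $J=\langle F(x)+2P(x),\,2G(x)\rangle\subseteq R[x]/\langle x^n-1\rangle$. For each fixed $s$, multiplying a generator of $J$ by $\eta_s$ and using $\eta_s\eta_t=\delta_{st}\eta_s$ gives $\eta_s(F+2P)=\eta_s(f_s(x)+2p_s(x))$ and $\eta_s\cdot 2G=\eta_s\cdot 2g_s(x)$; therefore $\eta_s\langle f_s(x)+2p_s(x),\,2g_s(x)\rangle=\eta_s C_s\subseteq J$, and summing over $s$ yields $C=\bigoplus_s\eta_s C_s\subseteq J$. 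Conversely, since $C$ is closed under multiplication by $x$ and under addition, and contains each $\eta_t(f_t+2p_t)=\eta_t f_t+2\eta_t p_t$ and each $2\eta_t g_t$ (because $\eta_t\in R$ and $\eta_t C_t\subseteq C$), it contains $\sum_t\eta_t f_t+2\sum_t\eta_t p_t=F+2P$ and $2\sum_t\eta_t g_t=2G$, hence $J\subseteq C$. This gives $C=J$, the first claim.

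For the "$n$ odd" refinement I would invoke the standard fact that when $n$ is odd, $x^n-1$ factors into pairwise coprime basic irreducibles over $\ZZ_4$ (so $\ZZ_4[x]/\langle x^n-1\rangle$ is a principal ideal ring and a product of Galois rings), and consequently every cyclic code $C_t$ over $\ZZ_4$ of odd length is principal: $C_t=\langle g_t(x)\rangle$ for a single generator $g_t(x)$ dividing $x^n-1$ — equivalently the "$2g_t$" part is absorbed and one may take $C_t=\langle f_t(x)\rangle$ with $f_t\mid x^n-1$. Re-running the decomposition argument above with a single generator $\sum_t\eta_t f_t(x)$ in place of the two generators, the same $\eta_s$-projection computation shows $\eta_s\langle f_s(x)\rangle=\eta_s C_s$ and hence $C=\langle\sum_t\eta_t f_t(x)\rangle$. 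The only subtlety to be careful about is matching this single-generator form with the two-generator data $f_t,g_t$ in the statement: one should note that in the odd-length case $C_t=\langle f_t(x)+2p_t(x),2g_t(x)\rangle$ can be rewritten as $\langle f_t(x)+2g_t(x)\rangle$ (this is exactly the reduction used in the $\ZZ_4$ cyclic-code literature, e.g. via the CRT splitting), which is why the stated generator is $\sum_t\eta_t f_t(x)+2\sum_t\eta_t g_t(x)$ rather than merely $\sum_t\eta_t f_t(x)$.

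The main obstacle is not the $\eta$-bookkeeping — that is routine once the componentwise multiplication identity is in hand — but rather the odd-length reduction from two generators to one: one must justify carefully that $\langle f_t+2p_t,\,2g_t\rangle=\langle f_t+2g_t\rangle$ over $\ZZ_4$ when $n$ is odd, using that $\ZZ_4[x]/\langle x^n-1\rangle$ is then a principal ideal ring, and keep track of the roles of $f_t$, $g_t$, $p_t$ (with $g_t\mid f_t\mid x^n-1$ up to the usual conventions) so that the claimed generator is literally correct. I would state this reduction as a short lemma (citing the $\ZZ_4$ cyclic-code structure theorem, e.g. from \cite{Wan97} or \cite{Li17}) and then the theorem follows immediately by applying it in each of the eight coordinates.
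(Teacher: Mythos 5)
Your argument is correct and takes essentially the same route as the paper: decompose $C$ via the pairwise orthogonal idempotents $\eta_t$ and prove equality by double inclusion, the only cosmetic difference being that the paper performs the explicit orthogonality computation on the inclusion $C\subseteq D$ (writing an arbitrary codeword as a combination of the two proposed generators) and declares the reverse inclusion obvious, whereas you project the generators by $\eta_s$ for the inclusion $C\subseteq J$ and treat $J\subseteq C$ as the easy direction. Your care with the odd-length reduction $\langle f_t(x)+2p_t(x),\,2g_t(x)\rangle=\langle f_t(x)+2g_t(x)\rangle$ actually goes beyond the paper, whose written proof covers only the two-generator claim and leaves the ``furthermore'' assertion to the standard structure theory of cyclic codes over $\ZZ_4$ of odd length (as in \cite{Wan97}, \cite{Li17}).
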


\begin{proof}
Let $\DS D=\left \langle \sum_{t=1}^8\eta_tf_t(x)+2\sum_{t=1}^8\eta_tp_t(x),~2\sum_{t=1}^8\eta_tg_t(x)\right \rangle.$
It is obvious that $D\subseteq C$. Let $c(x)\in C$. Because $\DS C=\oplus_{t=1}^8 \eta_tC_t$
and $C_t=\langle f_t(x)+2p_t(x),~2g_t(x) \rangle,$ then there exist $u_t(x),v_t(x)\in \mathbb{Z}_2[x]$ such that
\begin{align*}
c(x)&=\sum_{t=1}^8\eta_t((f_t(x)+2p_t(x))u_t(x)+2g_t(x)v_t(x))\\
    &=\sum_{t=1}^8\eta_t(f_t(x)+2p_t(x))u_t(x)+\sum_{t=1}^8\eta_t2g_t(x)v_t(x)\\
    &=\sum_{t=1}^8\eta_tu_t(x)\sum_{t=1}^8\eta_t(f_t(x)+2p_t(x))+\sum_{t=1}^8\eta_tv_t(x)\sum_{t=1}^82\eta_tg_t(x).
\end{align*}
So we have $C\subseteq D$ and hence $C=D$.
\end{proof}

By using Theorem \ref{T-Li} and the similar technique as in proof of Theorem \ref{T-CC}, we obtain generator polynomials
for the dual of cyclic codes as given in the theorem below.

\begin{theorem}
Let $C=\langle f(x)+2p(x),~2g(x) \rangle$ be a cyclic code over $\ZZ_4.$ Then
\[
C^{\perp}=\left \langle \sum_{t=1}^8\eta_t\widehat{g}_t(x)^*+2\sum_{t=1}^8\eta_tx^{\deg(\widehat{g}_t(x))-\deg(u_t(x))}u_t(x)^*,
~2\sum_{t=1}^8\widehat{f}_t(x)^* \right \rangle.
\]
\end{theorem}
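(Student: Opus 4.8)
The plan is to mirror the proof of Theorem~\ref{T-CC}, reducing everything to the known description of duals of cyclic codes over $\ZZ_4$ (Theorem~\ref{T-Li}) by means of the idempotent decomposition of $R$. First I would invoke Theorem~\ref{T-decom}(2) to write $C^{\perp}=C_1^{\perp}\eta_1\oplus C_2^{\perp}\eta_2\oplus\cdots\oplus C_8^{\perp}\eta_8$, where each $C_t^{\perp}$ is a cyclic code over $\ZZ_4$ (as shown above), i.e. an ideal of $\ZZ_4[x]/\langle x^n-1\rangle$. Then I would apply Theorem~\ref{T-Li} to $C_t=\langle f_t(x)+2p_t(x),~2g_t(x)\rangle$ for each $t\in\{1,2,\ldots,8\}$ to obtain
\[
C_t^{\perp}=\left\langle \widehat{g}_t(x)^{*}+2x^{\deg(\widehat{g}_t(x))-\deg(u_t(x))}u_t(x)^{*},~2\widehat{f}_t(x)^{*}\right\rangle .
\]
Writing $a_t(x):=\widehat{g}_t(x)^{*}+2x^{\deg(\widehat{g}_t(x))-\deg(u_t(x))}u_t(x)^{*}$ and $b_t(x):=2\widehat{f}_t(x)^{*}$, the claim is that $C^{\perp}$ equals the ideal $D:=\left\langle \sum_{t=1}^{8}\eta_t a_t(x),~\sum_{t=1}^{8}\eta_t b_t(x)\right\rangle$ of $R[x]/\langle x^n-1\rangle$ (with the second generator being $2\sum_{t=1}^{8}\eta_t\widehat{f}_t(x)^{*}$, which is how the stated generator $2\sum_{t=1}^{8}\widehat{f}_t(x)^{*}$ should be read).

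To prove $D=C^{\perp}$ I would check both inclusions exactly as in Theorem~\ref{T-CC}. For $D\subseteq C^{\perp}$: since $a_t(x),b_t(x)\in C_t^{\perp}$ and $C_t^{\perp}$ is an ideal of $\ZZ_4[x]/\langle x^n-1\rangle$, for any $r(x)=\sum_{s=1}^{8}\eta_s r_s(x)\in R[x]/\langle x^n-1\rangle$ we get $r(x)\,\eta_t a_t(x)=\eta_t\big(r_t(x)a_t(x)\big)\in\eta_t C_t^{\perp}$, using $\eta_s\eta_t=0$ for $s\neq t$ and $\eta_t^2=\eta_t$; summing over $t$ shows each generator of $D$, hence every element of $D$, lies in $\bigoplus_{t}\eta_t C_t^{\perp}=C^{\perp}$. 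For $C^{\perp}\subseteq D$: any $c(x)\in C^{\perp}$ has the form $c(x)=\sum_{t=1}^{8}\eta_t c_t(x)$ with $c_t(x)\in C_t^{\perp}=\langle a_t(x),b_t(x)\rangle$, so $c_t(x)=a_t(x)\alpha_t(x)+b_t(x)\beta_t(x)$ for some $\alpha_t(x),\beta_t(x)\in\ZZ_4[x]$; then, using pairwise orthogonality and $\eta_t^2=\eta_t$,
\[
c(x)=\sum_{t=1}^{8}\eta_t a_t(x)\alpha_t(x)+\sum_{t=1}^{8}\eta_t b_t(x)\beta_t(x)=\Big(\sum_{s=1}^{8}\eta_s\alpha_s(x)\Big)\Big(\sum_{t=1}^{8}\eta_t a_t(x)\Big)+\Big(\sum_{s=1}^{8}\eta_s\beta_s(x)\Big)\Big(\sum_{t=1}^{8}\eta_t b_t(x)\Big)\in D .
\]
Distributing $\sum_{t}\eta_t$ over the expression for $a_t(x)$ puts $D$ in the form displayed in the statement, which completes the argument.

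I expect no essential difficulty: all the real content is carried by Theorem~\ref{T-Li}, and the assembly is the same orthogonal-idempotent manipulation already used for Theorem~\ref{T-CC}. The only points needing a little care are bookkeeping ones — keeping the exponent $\deg(\widehat{g}_t(x))-\deg(u_t(x))$ attached to the correct index $t$ before summing (legitimate because each $C_t^{\perp}$ is computed entirely inside $\ZZ_4[x]/\langle x^n-1\rangle$), and verifying that the off-diagonal products $\eta_s\eta_t$ with $s\neq t$ really vanish so that the two diagonal generators $\sum_t\eta_t a_t(x)$ and $\sum_t\eta_t b_t(x)$ suffice. It is also worth stating explicitly that $\widehat{f}_t,\widehat{g}_t,u_t$ here denote the polynomials attached to the $\ZZ_4$-code $C_t$ via Theorem~\ref{T-Li}, so that the notation is unambiguous.
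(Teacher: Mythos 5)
Your proposal is correct and is exactly the route the paper intends: the paper gives no written proof beyond remarking that the result follows ``by using Theorem~\ref{T-Li} and the similar technique as in the proof of Theorem~\ref{T-CC},'' which is precisely the componentwise application of Theorem~\ref{T-Li} to each $C_t$ followed by the orthogonal-idempotent assembly you carry out. Your reading of the second generator as $2\sum_{t=1}^{8}\eta_t\widehat{f}_t(x)^{*}$ (and of the hypothesis as referring to the components $C_t$ of a cyclic code over $R$) is the correct interpretation of the statement as printed.
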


Now, let us turn to the special class of cyclic codes called quasi-cyclic codes.

Let $\sigma$ be a cyclic shift operator over $\ZZ_4^n.$
For any positive integer $s,$ let $\sigma_s$ be the quasi-shift defined by
\[
\sigma_s\left(a^{(1)} \mid a^{(2)} \mid \cdots \mid a^{(s)}\right)
=\left(\sigma\left(a^{(1)}\right) \mid \sigma\left(a^{(2)}\right) \mid \cdots \mid \sigma\left(a^{(s)}\right)\right),
\]
with $a^{(1)},a^{(2)},\ldots,a^{(s)}\in\ZZ_4^n$ and $"|"$ is a vector concatenation.
A quaternary \emph{quasi-cyclic code} $C$ of index $s$ and length $ns$ is a subset of $(\ZZ_4^n)^s$
such that $\DS \sigma_s(C)=C$. If $\DS R=\oplus_{t=1}^8 \eta_tR_t,$ we can write any $r\in R$ as
$\DS r=\sum_{t=1}^8\eta_tr_t$ with $r_t\in R_t,$ for $1 \leq t \leq 8.$
We define the mapping
\begin{center}
$\Phi: R^n \longrightarrow \left(\mathbb{Z}_4^{2^3}\right)^n$\\
$~~\DS \times_{i=0}^n r_i\longmapsto \times_{t=1}^8\times_{i=0}^{n-1} r_{t,i}$
\end{center}
with
$\DS r_i=\times_{t=1}^8 r_{t,i}$ for $i=0,1,\ldots,n-1$ and $r_{t,i}\in R_t.$

Then we have a similar theorem of Theorem 17 in \cite{Li17}.

\begin{theorem}
Let $\DS C=C_1 \eta_1 \oplus C_2 \eta_2 \oplus \cdots \oplus C_8 \eta_8$
be a cyclic code of length $n$ over $R.$
Then $\Phi(C)$ is a quasi-cyclic code of index $8$ and length $8n$ over $\ZZ_4.$
\end{theorem}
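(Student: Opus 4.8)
The plan is to unwind the definition of $\Phi$ and show it intertwines the cyclic shift $T$ on $R^n$ with the quasi-shift $\sigma_8$ on $(\ZZ_4^n)^8$, so that the $\Phi$-image of a $T$-invariant code is $\sigma_8$-invariant. First I would make the identification explicit: for $\cv=(c_0,c_1,\dots,c_{n-1})\in R^n$, write each $c_i=\sum_{t=1}^8 r_{t,i}\eta_t$ with $r_{t,i}\in\ZZ_4$ (the unique decomposition already established via $\phi$), so that $\Phi(\cv)=\bigl(r_{1,0},\dots,r_{1,n-1}\mid r_{2,0},\dots,r_{2,n-1}\mid\cdots\mid r_{8,0},\dots,r_{8,n-1}\bigr)$. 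Note this is exactly the Gray map $\Phi$ defined in Section~2 up to the block grouping, so $\Phi(C)$ is a linear code over $\ZZ_4$ of length $8n$ because $C$ is a submodule and $\phi$ is a ring isomorphism; that settles linearity for free.

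The key computational step is the commutation identity $\Phi\bigl(T(\cv)\bigr)=\sigma_8\bigl(\Phi(\cv)\bigr)$. Applying $T$ to $\cv$ gives $(c_{n-1},c_0,\dots,c_{n-2})$, whose $t$-th coordinate sequence under the $\eta_t$-decomposition is $(r_{t,n-1},r_{t,0},\dots,r_{t,n-2})$ because the decomposition coefficients are extracted coordinatewise and $\phi$ is applied entrywise. But $(r_{t,n-1},r_{t,0},\dots,r_{t,n-2})=\sigma(r_{t,0},r_{t,1},\dots,r_{t,n-1})$, the ordinary cyclic shift on the $t$-th $\ZZ_4^n$-block, and $\sigma_8$ by definition performs exactly this shift simultaneously on all eight blocks. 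Hence the two sides agree block by block, which is the identity claimed. I would state this as a one-line lemma inside the proof.

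From the intertwining identity the conclusion is immediate: since $T(C)=C$ (as $C$ is cyclic), for any $\cv\in C$ we have $\sigma_8(\Phi(\cv))=\Phi(T(\cv))\in\Phi(C)$, so $\sigma_8(\Phi(C))\subseteq\Phi(C)$; applying the same to $T^{-1}$ (or using that $\Phi$ is injective and $\sigma_8$ a bijection, so cardinalities match) gives equality $\sigma_8(\Phi(C))=\Phi(C)$. Together with linearity, $\Phi(C)$ is a quasi-cyclic code of index $8$ and length $8n$ over $\ZZ_4$, as required. I do not anticipate a genuine obstacle here; the only point needing care is bookkeeping — making sure the block ordering in the definition of $\Phi$ matches the order of the $\eta_t$'s and that the ``shift then decompose'' versus ``decompose then shift'' operations are seen to commute because both act purely coordinatewise. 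A reader could also simply cite the parallel argument in \cite{Li17} (their Theorem~17), but writing out the two-line shift computation makes the proof self-contained.
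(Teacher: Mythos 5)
Your proposal is correct and rests on the same underlying observation as the paper's proof: the Gray map is defined coordinatewise, so the cyclic shift on $R^n$ corresponds to shifting each of the eight $\ZZ_4^n$-blocks simultaneously. The two routes differ only in what is invoked. The paper first appeals to its earlier theorem that $C$ cyclic implies each component code $C_t$ is cyclic, and then argues that each shifted block lies in $C_t$, so the quasi-shifted vector lies in $\Phi(C)$ (implicitly using $\Phi(C)=C_1\times\cdots\times C_8$). You instead prove the intertwining identity $\Phi\circ T=\sigma_8\circ\Phi$ directly and conclude from $T(C)=C$ alone; this bypasses both the component-cyclicity theorem and the product description of $\Phi(C)$, so it is marginally more self-contained, and you are also more explicit than the paper about obtaining the equality $\sigma_8(\Phi(C))=\Phi(C)$ rather than just the inclusion. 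Either version is acceptable; the paper's phrasing has the small advantage of keeping visible the decomposition $C=C_1\eta_1\oplus\cdots\oplus C_8\eta_8$, which it reuses immediately afterward in the corollary on the cardinality of $\Phi(C)$, while your explicit two-line shift computation settles the only delicate bookkeeping point (that the block ordering in $\Phi$ matches the ordering of the $\eta_t$).
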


\begin{proof}
Let $\times_{i=0}^{n-1} c_i \in C.$ Let $c_i=\times_{t=1}^8 c_{t,i}$
for $i=0,1,\dots,n-1$ and $c_{t,i}\in C_t$. Since $C$ is cyclic code,
we have $C_t$ is cyclic for $ 1 \leq t \leq 8.$
This means that for every $t\in\{1,2,\dots,8\},$ we have $\sigma(\times_{i=0}^{n-1} r_{t,i})\in C_t,$
if $\times_{i=0}^{n-1} r_{t,i}\in C_t.$
Write $\Phi(\times_{i=0}^{n-1} c_i)=\times_{t=1}^8\times_{i=0}^{n-1} r_{t,i}.$
Then
\[
\sigma_{8}(\times_{t=1}^8\times_{i=0}^{n-1} r_{t,i})=\times_{t=1}^8\sigma(\times_{i=0}^{n-1} r_{t,i})\in\Phi(C).
\]
So we have $\Phi(C)$ is a quasi-cyclic code $C$ of index $8$ and length $8n$ over $\ZZ_4.$
\end{proof}

Furthermore, by using the Theorem 18 of \cite{Li17} below, we obtain directly the type of $\Phi(C)$ as given
in Corollary \ref{C-CC}.

\begin{theorem}[\cite{Li17}, Theorem 18]
Let $C_t,t\in\{1,2,\dots,8\}$ be a cyclic code of length $n$ ($n$ is odd) over $\ZZ_4.$
Write $C_t=\langle f_{1,t}(x)+2f_{2,t}(x) \rangle$ with $f_{1,t}(x)$ and $f_{2,t}(x)$ are monic
factors of $x^n-1$ over $\ZZ_4$ and $f_{2,t}(x) \mid f_{1,t}(x).$
Then the cardinality of $C_t,$ for $1 \leq t \leq 8,$ is
\[
4^{n-\deg(f_{1,t}(x))}2^{\deg(f_{1,t}(x)-\deg(f_{2,t}(x))}.
\]
\end{theorem}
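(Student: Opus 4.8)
The plan is to compute $|C_t|$ through its \emph{residue} and \emph{torsion} codes, the standard device for $\ZZ_4$-linear codes; throughout I drop the index $t$, write $C=\langle f_1(x)+2f_2(x)\rangle$, $R_n:=\ZZ_4[x]/\langle x^n-1\rangle$, and $h_1(x):=(x^n-1)/f_1(x)$, exactly as in the statement, and I let a bar denote reduction of coefficients modulo $2$. Two preliminaries drive the argument. First, since $n$ is odd, $x^n-1$ is squarefree over $\FF_2$, so $\bar f_1$ and its complementary divisor $\bar h_1=(x^n-1)/\bar f_1$ are coprime, whence so are $\bar k:=\bar f_1/\bar f_2$ and $\bar h_1$ (here $\bar f_2\mid\bar f_1$ because $f_2\mid f_1$). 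Second, reduction modulo $2$ gives a short exact sequence of $\ZZ_4$-modules $0\to C\cap 2R_n\to C\to\mathrm{Res}(C)\to 0$, where $\mathrm{Res}(C)=\{\bar c:c\in C\}$, so $|C|=|\mathrm{Res}(C)|\cdot|C\cap 2R_n|$; the proof is then just the evaluation of these two factors.

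For $\mathrm{Res}(C)$: since $f_1+2f_2\equiv f_1\pmod 2$, reducing the generator gives $\mathrm{Res}(C)=\langle\bar f_1\rangle$, the binary cyclic code generated by $\bar f_1$, and as $f_1$ is monic we get $\deg\bar f_1=\deg f_1$ and hence $|\mathrm{Res}(C)|=2^{\,n-\deg f_1}$.

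For $C\cap 2R_n$: I would check that $(f_1+2f_2)a(x)\in 2R_n$ iff $\bar f_1\bar a=0$ in $\FF_2[x]/\langle x^n-1\rangle$, i.e. iff $\bar a\in\langle\bar h_1\rangle$, i.e. iff $a=h_1b+2c$ for some $b,c\in R_n$. Expanding $(f_1+2f_2)(h_1b+2c)$ in $R_n$ and using $f_1h_1=x^n-1=0$ collapses it to $2(f_1c+f_2h_1b)$, and conversely $2(f_1c+f_2h_1b)=(f_1+2f_2)(2c+h_1b)\in C$; therefore
\[
C\cap 2R_n=2\langle f_1,\,f_2h_1\rangle=2\big\langle\gcd(\bar f_1,\,\bar f_2\bar h_1)\big\rangle.
\]
Writing $\bar f_1=\bar f_2\bar k$ and using $\gcd(\bar k,\bar h_1)=1$ gives $\gcd(\bar f_1,\bar f_2\bar h_1)=\bar f_2$, so $C\cap 2R_n=2\langle\bar f_2\rangle$ and $|C\cap 2R_n|=2^{\,n-\deg f_2}$. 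Multiplying, $|C|=2^{\,n-\deg f_1}\cdot 2^{\,n-\deg f_2}=4^{\,n-\deg f_1}\,2^{\,\deg f_1-\deg f_2}$ (and $\deg f_1\ge\deg f_2$ since $f_2\mid f_1$); reinstating the index $t$ yields the statement for all $t\in\{1,\dots,8\}$.

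The step I expect to be the crux is the chain $C\cap 2R_n=2\langle\gcd(\bar f_1,\bar f_2\bar h_1)\rangle=2\langle\bar f_2\rangle$: it is exactly here that both hypotheses are used — the oddness of $n$, so that annihilators and gcd's of cyclic ideals behave as over a field, and $f_2\mid f_1$, so that the gcd is $\bar f_2$ and not something larger. A cleaner but less self-contained alternative is to Hensel-lift $x^n-1=\prod_i g_i(x)$ into pairwise coprime monic basic irreducibles over $\ZZ_4$, use CRT to write $R_n\cong\prod_i\mathrm{GR}(4,\deg g_i)$, and observe that in the $i$-th Galois-ring factor the image of $f_1+2f_2$ generates the whole ring, the ideal $(2)$, or $\{0\}$ according to whether $g_i\nmid f_1$, $g_i\mid f_1$ but $g_i\nmid f_2$, or $g_i\mid f_2$; summing the corresponding contributions $\deg g_i$ again yields $4^{\,n-\deg f_1}2^{\,\deg f_1-\deg f_2}$.
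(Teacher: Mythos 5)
Your argument is correct. Note, however, that the paper itself offers no proof of this statement: it is quoted verbatim as Theorem 18 of the cited work of Li, Guo, Zhu, and Kai \cite{Li17}, and is used only as an input to Corollary \ref{C-CC}. So there is no in-paper proof to compare against; what you have written supplies the missing argument, and it is essentially the standard one underlying the cited result (and the classical structure theory of cyclic codes over $\ZZ_4$ of odd length, cf. \cite{Wan97}): factor $|C_t|$ as $|\mathrm{Res}(C_t)|\cdot|C_t\cap 2R_n|$ via the mod-$2$ reduction, identify the residue code with the binary cyclic code $\langle \bar f_{1,t}\rangle$ of size $2^{n-\deg f_{1,t}}$, and identify the torsion part with $2\langle \bar f_{2,t}\rangle$ of size $2^{n-\deg f_{2,t}}$, which multiplies out to $4^{n-\deg f_{1,t}}2^{\deg f_{1,t}-\deg f_{2,t}}$. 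Your key computation $C\cap 2R_n=2\langle f_1,\,f_2h_1\rangle=2\langle\gcd(\bar f_1,\bar f_2\bar h_1)\rangle=2\langle\bar f_2\rangle$ is sound, and you correctly isolate where the hypotheses enter: oddness of $n$ gives squarefreeness of $x^n-1$ over $\FF_2$, hence $\gcd(\bar f_1,\bar h_1)=1$, and $f_2\mid f_1$ then forces the gcd to be exactly $\bar f_2$. The alternative CRT/Galois-ring sketch you mention is also valid and is closer in spirit to how such counts are usually organized in the $\ZZ_4$ literature; either route would serve equally well if one wished to make the paper self-contained at this point.
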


The corollary below follows directly.
\begin{cor}\label{C-CC}
Let $\DS\Phi(C)=\prod_{t=1}^8 C_t$ be a linear code of length $8n$ ($n$ is odd)
over $\ZZ_4$ and $C_t$ is a cyclic code over $\ZZ_4$ for every $t\in\{1,2,\ldots,8\}.$
Then the cardinality of $\Phi(C)$ is
\[
\DS 4^{\sum_{t=1}^8 (n-\deg(f_{1,t}(x)))}2^{\sum_{t=1}^8 (\deg(f_{1,t}(x)-\deg(f_{2,t}(x))}.
\]
\end{cor}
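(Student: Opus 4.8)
The plan is to reduce the claim about $\Phi(C)$ to the eight separate statements about the cyclic codes $C_t$ over $\ZZ_4$ and then multiply. First I would recall from the preceding discussion that $\Phi(C)$ is (up to permutation equivalence) the direct sum $\prod_{t=1}^{8} C_t$ of the eight $\ZZ_4$-components, so that its cardinality is the product $|\Phi(C)| = \prod_{t=1}^{8} |C_t|$. This is exactly the content we already have from the decomposition $C = C_1\eta_1 \oplus \cdots \oplus C_8\eta_8$ together with $|C| = \prod_{t=1}^{8}|C_t|$ and the fact that $\Phi$ is a bijection; the map $\Phi$ merely rearranges coordinates, so it preserves cardinality.

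Next I would invoke the immediately preceding Theorem 18 of \cite{Li17} (quoted in the excerpt) applied to each $C_t$ individually. Since $n$ is odd and each $C_t$ is a cyclic code of length $n$ over $\ZZ_4$, we may write $C_t = \langle f_{1,t}(x) + 2 f_{2,t}(x)\rangle$ with $f_{1,t}(x), f_{2,t}(x)$ monic divisors of $x^n - 1$ over $\ZZ_4$ and $f_{2,t}(x) \mid f_{1,t}(x)$, and that theorem gives
\[
|C_t| = 4^{\,n - \deg(f_{1,t}(x))}\, 2^{\,\deg(f_{1,t}(x)) - \deg(f_{2,t}(x))}
\]
for each $t \in \{1,2,\ldots,8\}$.

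Finally I would combine the two ingredients:
\[
|\Phi(C)| = \prod_{t=1}^{8} |C_t| = \prod_{t=1}^{8} 4^{\,n - \deg(f_{1,t}(x))}\, 2^{\,\deg(f_{1,t}(x)) - \deg(f_{2,t}(x))}
= 4^{\sum_{t=1}^{8}(n - \deg(f_{1,t}(x)))}\, 2^{\sum_{t=1}^{8}(\deg(f_{1,t}(x)) - \deg(f_{2,t}(x)))},
\]
using that the exponents of a common base add when multiplying powers. This is precisely the asserted formula, so the corollary follows.

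The only genuine subtlety — really a bookkeeping point rather than a hard obstacle — is justifying the identification $|\Phi(C)| = \prod_{t=1}^{8}|C_t|$; one must be careful that $\Phi$ is applied to $C$ in its decomposed form and that the hypothesis "$C_t$ is a cyclic code for every $t$" is exactly what lets us apply Theorem 18 componentwise. Everything else is a routine manipulation of exponents, so the proof is short.
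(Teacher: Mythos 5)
Your argument is exactly the one the paper intends: it states that the corollary ``follows directly'' from the quoted Theorem 18 of \cite{Li17}, i.e.\ apply that cardinality formula to each cyclic component $C_t$ and multiply, using $|\Phi(C)|=\prod_{t=1}^{8}|C_t|$ from the decomposition $C=C_1\eta_1\oplus\cdots\oplus C_8\eta_8$ and the bijectivity of $\Phi$. Your write-up simply makes this implicit reasoning explicit, so it is correct and follows the same route.
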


\subsection{Some examples}

Here we provide some examples of cyclic codes of odd length over $R$ and their $\ZZ_4$-images.

\begin{ex}
Let $n=3$. In $\ZZ_4[x],$ $x^3-1=(x - 1) (x^2 + x + 1)$.

Choose $C_i=\langle (x^2 + x + 1)+2 \rangle$ for $i=1,2,\ldots,8.$ We have $\DS C=\oplus_{t=1}^8C_t \eta_t.$
Parameters of $\Phi(C)$ is $[24,4^82^{16},2].$ $\blacktriangleleft$
\end{ex}

\begin{ex}
Let $n=5$. In $\ZZ_4[x],$ $x^5-1=(x - 1) (x^4+x^3+x^2 + x + 1).$
Choose $C_i=\langle (x^4+x^3+x^2 + x + 1)+2 \rangle$ for $i=1,2,\ldots,8$. We have $\DS C=\oplus_{t=1}^8C_t \eta_t.$
Parameters of  $\Phi(C)$ is $[40,4^82^{32},2].$
$\blacktriangleleft$
\end{ex}

\begin{ex}
Let $n=7$. In $\ZZ_4[x],$ $x^7-1=(x + 1) (x^3 + x + 1) (x^3 + x^2 + 1)$.
Choose
\begin{align*}
    C_1&=C_2=C_3=\langle (x^3 + x + 1)+2 \rangle,\\
    C_4&=C_5=C_6=C_7=C_8=\langle (x^3 + x^2 + 1)+2 \rangle.
\end{align*}
Then we have
$\DS C=\oplus_{t=1}^8C_t \eta_t$
is a cyclic code over $R.$ Parameters of $\Phi(C)$ is $[56,4^{32}2^{24},2].$

If we choose another set of $C_i$ with
\[
C_i=\langle (x^3 + x + 1)(x^3 + x^2 + 1)+2 \rangle, ~i=1,2,\ldots,8,
\]
then we have
$\DS C=\oplus_{t=1}^8C_t\eta_t$
is also a cyclic code over $R.$ Parameters of $\Phi(C)$ is $[56,4^{8}2^{48},6].$
Let us choose
\begin{align*}
    C_1&=C_2=C_3=\langle (x + 1) (x^3 + x + 1) (x^3 + x^2 + 1)+2(x^3 + x + 1) \rangle,\\
    C_4&=C_5=C_6=C_7=C_8=\langle (x^3 + x^2 + 1)+2 \rangle.
\end{align*}
We have
$\DS C=\oplus_{t=1}^8C_t\eta_t$
is also a cyclic code over $R.$ Parameters of $\Phi(C)$ is $[56,4^{23}2^{27},2].$
$\blacktriangleleft$
\end{ex}

\begin{ex}
Let $n=9.$ In $\ZZ_4[x]$, $x^9-1=(x + 1) (x^2 + x + 1)(x^6+x^3+1).$
Choose
\begin{center}
    $C_i=\langle (x^2 + x + 1)(x^6+x^3+1)+2(x^6+x^3+1) \rangle, ~i=1,2,\ldots,8, $
\end{center}
then $\DS C=\oplus_{t=1}^8C_t\eta_t$
is a cyclic code over $R$. Parameters of $\Phi(C)$ is $[72,4^{8}2^{16},6].$

If we choose
\[
C_i=\langle (x^2 + x + 1)(x^6+x^3+1)+2(x^2 + x + 1) \rangle, ~i=1,2,\ldots,8,
\]
We have
$\DS C=\oplus_{t=1}^8C_t\eta_t$
is a cyclic code over $R.$ Parameters of $\Phi(C)$ is $[72,4^{8}2^{32},3].$
$\blacktriangleleft$
\end{ex}

\begin{ex}
Let $n=15.$ In $\ZZ_4[x],$ $x^{15}-1=(x + 1) (x^2 + x + 1)(x^4 + x + 1)(x^4+x^3+1)(x^4+x^3+x^2+1).$

Choose
\begin{align*}
    C_i&=\langle (x^4 + x + 1)(x^4+x^3+1)(x^4+x^3+x^2+1)\\
       &~~+2(x^4 + x + 1)(x^4+x^3+1) \rangle,~\text{for }1 \leq i \leq 4,\\
    C_i&=\langle (x^4 + x + 1)(x^4+x^3+1)(x^4+x^3+x^2+1)\\
       &~~+2(x^4+x^3+1)(x^4+x^3+x^2+1) \rangle,~\text{for }5 \leq i \leq 8.
\end{align*}
We have $\DS C=\oplus_{t=1}^8C_t\eta_t$
is a cyclic code over $R.$ Parameters of $\Phi(C)$ is $[120,4^{24}2^{32},10].$

If we choose
\begin{align*}
    C_i&=\langle (x^2 + x + 1)(x^4 + x + 1)(x^4+x^3+1)\\
       &~~+2(x^4 + x + 1)(x^4+x^3+1) \rangle,~\text{for }1 \leq i \leq 4,\\
    C_i&=\langle (x^2 + x + 1)(x^4+x^3+1)(x^4+x^3+x^2+1)\\
       &~~+2(x^4+x^3+1)(x^4+x^3+x^2+1) \rangle~\text{for }5 \leq i \leq 8.
\end{align*}
then $\DS C=\oplus_{t=1}^8C_t \eta_t$
is a cyclic code over $R.$ Parameters of $\Phi(C)$ is $[120,4^{40}2^{16},8].$
$\blacktriangleleft$
\end{ex}

\begin{ex}
Let $n=31.$ In $\ZZ_4[x],$ $x^{31}-1=F_1(x) F_2(x) F_3(x)F_4(x)F_5(x)F_6(x)F_7(x)$ with
\begin{align*}
    F_1(x)&=x+1\\
    F_2(x)&=x^5 + x^2 + 1\\
    F_3(x)&=x^5 + x^3 + 1\\
    F_4(x)&=x^5 + x^3 + x^2 + x + 1\\
    F_5(x)&=x^5 + x^4 + x^2 + x + 1\\
    F_6(x)&=x^5 + x^4 + x^3 + x + 1\\
    F_7(x)&=x^5 + x^4 + x^3 + x^2 + 1.
\end{align*}
Choose
\begin{align*}
C_1&=C_2=\langle F_1(x)F_2(x)F_3(x)+2F_1(x)F_2(x) \rangle,\\
C_3&=\langle F_1(x)F_3(x)F_4(x)+2F_1(x)F_3(x) \rangle,\\
C_4&=\langle F_1(x)F_4(x)F_5(x)+2F_1(x)F_4(x) \rangle,\\
C_5&=\langle F_1(x)F_5(x)F_6(x)+2F_1(x)F_5(x) \rangle,\\
C_6&=\langle F_1(x)F_6(x)F_7(x)+2F_1(x)F_6(x) \rangle,\\
C_7&=C_8=\langle F_1(x)F_2(x)F_7(x)+2F_1(x)F_7(x)\rangle.
\end{align*}
We have  $\DS C=\oplus_{t=1}^8C_t \eta_t$ is a cyclic code over $R.$ Parameters of $\Phi(C)$ is $[248,4^{160}2^{40},8].$

For another set of $C_i$ such as
\begin{align*}
C_1&=C_2=C_3=\langle F_1(x)F_2(x)F_3(x)F_4(x)F_5(x)+2F_1(x)F_2(x)F_3(x) \rangle,\\
C_4&=C_5=C_6=\langle F_1(x)F_3(x)F_4(x)F_5(x)F_6(x)+2F_1(x)F_3(x)F_4(x) \rangle,\\
C_7&=C_8=\langle F_1(x)F_4(x)F_5(x)F_6(x)F_7(x)+2F_1(x)F_4(x)F_5(x) \rangle,
\end{align*}
we have  $\DS C=\oplus_{t=1}^8C_t\eta_t$ is a cyclic code over $R$. Parameters of $\Phi(C)$ is $[248,4^{80}2^{80},12].$

Let us choose another set of $C_i:$
\begin{align*}
C_1&=C_2=C_3=C_4=\langle F_2(x)F_3(x)F_4(x)F_5(x)F_6(x)+2F_3(x)F_4(x)F_5(x)F_6(x) \rangle,\\
C_5&=C_6=C_7=C_8=\langle F_3(x)F_4(x)F_5(x)F_6(x)F_7(x)+2F_4(x)F_5(x)F_6(x)F_7(x) \rangle.
\end{align*}
We have  $\DS C=\oplus_{t=1}^8C_t\eta_t$ is a cyclic code over $R.$ Parameters of $\Phi(C)$ is $[248,4^{48}2^{32},22].$
$\blacktriangleleft$
\end{ex}

\begin{remark}
We compare our results on linear codes over $\ZZ_4$ with the database of $\ZZ_4$ codes available online \cite{database}.
We conclude that the resulting linear codes are all new with the highest known minimum distances.
\end{remark}

\section{Conclusion}
In this paper we derive structural properties of linear codes over the ring
$R:=\ZZ_4+u\ZZ_4+v\ZZ_4+w\ZZ_4+uv\ZZ_4+uw\ZZ_4+vw\ZZ_4+uvw\ZZ_4.$  We also obtained some new and optimal
linear codes having parameters which are unknown to exist before.

There are several direction to further research on the codes over the ring.
We are now observing the self-duality as well as
polycyclic codes over the ring $R.$  We obtained structural properties regarding self-dual codes as well as constacyclic
codes over $R.$  The results, which are not included here, will be published elsewhere in separate
papers.

\section*{Acknowledgement}
This research is supported by \emph{Riset P3MI ITB 2018.}
A part of this work was done while the third author
visited Research Center for Pure and Applied Mathematics (RCPAM), Tohoku University, Japan on July 2018
under the financial support from \emph{Penelitian Berbasis Kompetensi Kemenristekdikti}.
The third author thanks Prof. Hajime Tanaka for warm hospitality.


\begin{thebibliography}{20}

\bibitem{Abualrub}
T. Abualrub and I. Siap, "Reversible cyclic codes over $\ZZ_4,$"
\emph{Australasian Journal of Combinatorics}
{\bf 38} 2007, 195-205.

\bibitem{Bandi1}
R.K. Bandi and M. Bhaintwal, "Codes over $\ZZ_4 + u\ZZ_4$ with respect to
Rosenbloom-Tsfasman metric,"
\emph{2013 International Conference on Advances in Computing, Communications and Informatics (ICACCI)}, 2014, 37-41.

\bibitem{Bandi2}
R.K. Bandi and M. Bhaintwal, "Self-dual codes over $\ZZ_4 + w\ZZ_4,$"
\emph{Discrete Mathematics, Algorithms and Applications} {\bf 7}(2) 2015, 1550014 (10 pages).

\bibitem{database}
Database of $\ZZ_4$ codes, available from\\
 http://www.asamov.com/Z4Codes/CODES/ShowCODESTablePage.aspx.
(accessed at April 2019).

\bibitem{Dougherty-ceng}
Y. Cengellenmis, A. Dertli,  and S.T. Dougherty, "Codes over an Infinite Family of Rings with a Gray Map,"
\emph{Designs, Codes, and  Cryptography}, {\bf 72}(3) 2014, 559-580.

\bibitem{Doug2017}
S.T. Dougherty, \emph{Algebraic coding theory over finite commutative rings,} (Springer briefs in Mathematics), Springer, 2017.

\bibitem{Dougherty}
S.T. Dougherty, J.L. Kim, H. Kulosman, and H. Liu, "Self-dual codes over commutative Frobenius rings,"
\emph{Finite Fields and Their Applications} {\bf 16}(1) 2010, 14-26.

\bibitem{Doug-liu}
S.T. Dougherty, and H. Liu, "Independence of vectors in codes over rings," \emph{Design, Codes, and Cryptography},
{\bf 51}(1) 2009, 55-68.

\bibitem{Dougherty2}
S.T. Dougherty, J.-L. Kim, and H. Kulosman, "MDS codes over finite principal ideal rings,"
\emph{Design, Codes, and Cryptography} {\bf 50}(1) 2009, 77-92.

\bibitem{Guenda}
K. Guenda and T.A. Gulliver, "MDS and self-dual codes over rings," \emph{Finite Fields Appl.} {\bf 18}(6), 2012, 1061-1075.

\bibitem{Hammons}
A.R. Hammons, P.V. Kumar, A.R. Calderbank, N.J.A. Sloane, and P. Sol\'{e},
"The $\ZZ_4$-linearity of Kerdock, Preparata, Goethals and Related Codes,"
\emph{IEEE Transactions on Information Theory}
{\bf 40}, 1994, 301-319

\bibitem{Huffman}
W. Huffman, and V. Pless, {\it Fundamentals of Error Correcting Codes,} Cambridge University Press, 2003.


\bibitem{Li17}
P. Li, X. Guo, S. Zhu, and X. Kai, "Some results on linear codes over the ring $\ZZ_4+u\ZZ_4+v\ZZ_4+uv\ZZ_4,$"
\emph{Journal of Applied Mathematics and Computing} {\bf 54}(1-2) 2017, 307-324.

\bibitem{Shiro}
K. Shiromoto, "Singleton bounds for codes over finite rings," \emph{Journal of Algebraic Combinatorics} {\bf 12}(1) 2000, 95-99.

\bibitem{Shiro2}
K. Shiromoto, "Note on MDS codes over the integers modulo $p^m$," \emph{Hokkaido Math. J.} {\bf 29}(1) (2000), 149-157.

\bibitem{Singleton}
R.C. Singleton, "Maximum distance $q$-nary codes," \emph{IEEE Trans. Information Theory} IT-10, 1964, 116-118.

\bibitem{Yildiz2014}
B. Yildiz and S. Karadeniz, "Linear codes over $\ZZ_4+u\ZZ_4:$ MacWilliams identities, projections,
and formally self-dual codes," \emph{Finite Fields Appl.} {\bf 27} (2014), 24-40.

\bibitem{Wan97}
Z.-X. Wan, {\it Quaternary codes,} World Scientific Publishing, 1997.

\bibitem{Wood}
J. Wood, "Duality for Modules over Finite Rings and Applications to Coding Theory," \emph{American Journal of Mathematics} {\bf 121}
1999, 555-575.
\end{thebibliography}
\end{document}